\def\iid{\buildrel {\rm i.i.d.} \over \sim}
\def\i.i.d.{\buildrel {\rm i.i.d.} \over \sim}
\def\cw#1 { \overset{\mathbb{P}}{\underset{#1}{\longrightarrow}} }
\def\Natu0{\mathbb{N}_0}
\def\P#1{{\mathrm{P}}\left(#1\right)}
\def\E#1{{\mathrm E}\left[#1\right]}
\def\Var#1{{\mathrm Var}\left(#1\right)}
\def \rcov#1#2 {{\rm cov}_{#1}\left( #2\right)}
\DeclareMathOperator*{\argmin}{arg\,min}
\newtheorem{lemma}{Lemma}
\newtheorem{theorem}{Theorem}
\newtheorem{corollary}{Corollary}
\begin{document}
\title{ Designing a Bonus-Malus system reflecting the claim size under the dependent frequency--severity model \\
 }

%
%
%
%
%
\author{Rosy Oh\footnote{First authors (rosy.oh5@gmail.com),  Institute of Mathematical Sciences, Ewha Womans University, Seoul, Republic of Korea.}, Joseph H.T. Kim\footnote{Corresponding authors (jhtkim@yonsei.ac.kr), Department of Applied Statistics, College of Business and Economics, Yonsei University, Seoul, Republic of Korea.}, Jae Youn Ahn\footnote{Corresponding authors (jaeyahn@ewha.ac.kr), Department of Statistics, Ewha Womans University, Seoul, Republic of Korea.}}


\maketitle

\begin{abstract}

In auto insurance, a Bonus-Malus System (BMS) is commonly used as a posteriori risk classification mechanism to set the premium for the next contract period based on a policyholder's claim history. Even though recent literature reports evidence of a significant dependence between frequency and severity, the current BMS practice is to use a frequency-based transition rule while ignoring severity information. Although \citet{PengAhn} claim that the frequency-driven BMS transition rule can accommodate the dependence between frequency and severity, their proposal is only a partial solution, as the transition rule still completely ignores the claim severity and is unable to penalize large claims. In this study, we propose to use the BMS with a transition rule based on both frequency and size of claim, based on the bivariate random effect model, which conveniently allows dependence between frequency and severity. We analytically derive the optimal relativities under the proposed BMS framework and show that the proposed BMS outperforms the existing frequency-driven BMS. Later numerical experiments are also provided using both hypothetical and actual datasets in order to assess the effect of various dependencies on the BMS risk classification and confirm our theoretical findings.

\end{abstract}

Dependence, Generalized linear model, Rate making, Bonus-Malus system, bivariate random effects model 

JEL Classification: C300


\vfill

\pagebreak

\vfill

\pagebreak

\section{Introduction}
%

 In the auto insurance industry, the ratemaking process begins by classifying a given policyholder into different risk classes at the outset of the contract. This initial or a priori ratemaking process is then followed by a posteriori ratemaking system in subsequent policy years based on the claim history of the policyholder. For practitioners, this a posteriori ratemaking system for the prediction of a premium based on a policyholder's claim history and it is known as the Bonus-Malus System (BMS). The BMS system typically constitutes three elements: the Bonus-Malus (BM) levels, transition rules to navigate the BM levels, and the relativity attached to each BM level. In the classical BMS, the transition rule is governed by frequency; it ignores the severity information, which implies that the future loss of a policyholder can be appropriately modeled by predicting the frequency only (see, e.g., \citet{Lemaire2, Denuit2, Chong}. In reality, this frequency-driven BMS transition rule is the standard practice in many jurisdictions. It carries an implicit assumption that the frequency and severity are independent, such that the premium can be simply computed as a product of the mean frequency and mean severity. Theoretically, the structure of the traditional BMS is directly related to the classical theory of the collective risk model, which assumes independence between frequency and severity for mathematical tractability and convenience \citep{Klugman}.\\

However, a series of recent empirical studies have shown that the dependence between frequency and severity in auto insurance is statistically significant (\citet{Frees4,Garrido}). This phenomenon invalidates the practice of using frequency-driven BMS and highlights the need to extend the classical collective risk model by allowing some dependence structure between frequency and severity. Existing studies on dependent frequency--severity models and the associated insurance premiums include copula-based models \citep{Czado, Frees4}, two-step frequency--severity models \citep{Frees2, Peng, Garrido, Park2018does}, and bivariate random effect-based models \citep{Pinquet, Cossette, Bastida, Czado2015, Lu2016, JKWoo}.

The random effect model is especially popular in insurance ratemaking because of the mathematical tractability in its prediction. The bivariate random effect model consists of two random effect components. The first random effect induces the dependence among frequencies and the second induces the dependence among individual severities. These two random effects are then jointly modeled to induce the dependence between frequency and severities at the distribution level. Statistical methods for prediction using the random effect model are well developed in the statistical and insurance literature, such as in \citet{Pinquet, Cossette, Bastida, Czado2015, Lu2016, JKWoo}. \\

While statistical modeling and analyses for accommodating the various dependence structures in the collective risk model have been actively conducted, incorporating such a dependence structure in the BMS is largely overlooked in the literature. This lacuna could be attributed to the fact that developing and constructing a theoretically solid BMS structure that is determined by both frequency and severity is not straightforward once the frequency--severity relationship becomes dependent in complicated ways. An exception is the recent work of \citet{PengAhn}, where the authors show that the classical BMS can actually accommodate the frequency--severity dependence if we were to properly adjust the optimal BM relativity.
However, the transition rule in this BMS model is still completely governed by the frequency while ignoring the severity history; in this sense, the BMS proposed by \citet{PengAhn} is only a partial solution to the task of creating a more complete BMS that depends on both frequency and severity. We note that \citet{Frangos2001,Tzougas2014,Gomez, Lu2016, JKWoo} differently consider a posteriori ratemaking system with the past claim history of both frequency and severity, but the proposed premiums in these studies are written as mathematical formulas. Hence, the premiums are not easily convertible to a proper BMS with levels, transition rule, and relativity.\\

To this extent, we study a BMS that depends on the history of both frequency and severity, with the bivariate random effect model as the underlying statistical model. Specifically, we generalize the approach of \citet{PengAhn} and propose a new BMS transition rule where the past claim's size as well as the number of claims affect the next year's BM level.

Next, using the analogy between the Bayesian premium and the BMS premium, we analytically derive a set of optimal relativities that minimize the expected squared error in the presence of a frequency--severity dependence under this transition rule.

Furthermore, we show that, with a proper choice of the severity size threshold, the proposed BMS structure is guaranteed to outperform other frequency-driven BMSs for a statistical criterion. Given that current BMS practice neglects the severity history and is unable to penalize large-sized claims, the proposed method can provide a new, theoretically solid means of developing a BMS that is more faithful to the true premium or loss prediction.

The remaining paper is organized as follows. In section 2, we review the existing BMS along with the bivariate random effect model as the underlying statistical model. To motivate the use of the history of both frequency and severity, we consider a hypothetical frequency--severity model in section 3. In this model, the closed-form expressions are available for the Bayesian premium. Through this exercise, we illustrate how Bayesian premiums vary by the type of the incorporated claim history. Section 4 proposes a new BMS structure whose transition rule incorporates the history of both frequency and severity; we thus derive the optimal relativities analytically. In section 5, we show that the proposed BMS structure is guaranteed to outperform other frequency-driven BMS, with an optimally chosen severity size threshold; we then numerically illustrate this using a model inspired from actual dataset. A real insurance dataset is analyzed in section 6 to confirm the usefulness and superiority of the proposed BMS. Finally, section 7 concludes the paper.

\section{Existing BMS models: A review}\label{sec.3}
In a typical BM, the insurer classifies a given policyholder at the outset of the contract based on a priori risk characteristics that are observable. These characteristics would include, for example, gender, age, and vehicle type. As time passes, the policyholder's claim experiences are recorded, and the insurer will periodically reassess the risk of the policyholder and assign her to different risk classes. This so-called a posteriori risk classification based on the claim history involves the Bayesian premium; however, in practice, it is simplified and expressed as the BM relativity along with the transition rule.

Statistical modeling of a BMS is based on the collective risk model, where the parameters of the frequency and severity distributions vary over policyholders. These parameters are first estimated from the a priori risk characteristics at the moment of contract, and subsequently readjusted using the claim history over time, known as the posteriori risk classification process.\\

In this section, we review two existing BMSs. The first one is a classical system that aims for the predicted frequency, ignoring the severity component. The second BMS is a more recent one and targets the predicted aggregate severity based on the joint modeling of frequency and severity. However, the posteriori risk classifications in these BMSs are determined solely based on the frequency history under both BMSs.\\

To fix the notation, we denote $\mathbb{N}$, $\mathbb{N}_0$, $\mathbb{R}$, and $\mathbb{R}^+$ for the set of natural numbers, non-negative integers, real numbers, and positive real numbers, respectively. For a given policyholder, define $N_{t}$ to indicate the random frequency in the $t$th policy year, that is, the time interval between time $t-1$ and $t$. Use
\[
\boldsymbol{Y}_{t}=\begin{cases}
 (Y_{t, 1}, \cdots, Y_{t, N_{t}}) & N_{t}>0\\
 \hbox{Not defined} & N_{t}=0
\end{cases}
\]
to denote the associated individual severities. We further define the aggregate and average severity as

\[
S_{t}=\begin{cases}
\sum\limits_{j=1}^{N_{t}}Y_{t,j} & N_{t}>0\\
0 &N_{t}=0
\end{cases}
\quad\quad
\hbox{and}
\quad\quad
M_{t}=\begin{cases}
\frac{\sum\limits_{j=1}^{N_{t}}Y_{t,j}}{N_{t}} & N_{t}>0\\
\hbox{0} &N_{t}=0
\end{cases},
\]
respectively, so that we have the following relationship between $S_t$ and $M_t$:
\[
S_{t}=N_{t}M_{t}.
\]
Realizations of $N_{t}$, $\boldsymbol{Y}_{t}$, $S_{t}$, and $M_{t}$ are denoted by $n_{t}$, $\boldsymbol{y}_{t}$, $s_{t}$, and $m_{t}$, respectively. Furthermore, we define the claim history of the given policyholder at the end of year $T$ as
\[
\mathcal{F}_{T}^{\rm [full]}=\left\{(n_{t}, \boldsymbol{y}_{t} )\big\vert t=1, \cdots, T \right\}\quad\hbox{and}\quad
\mathcal{F}_{T}^{\rm [freq]}=\left\{n_{t}\big\vert t=1, \cdots, T \right\},
\] where the latter contains the frequency information only.\\

As is common in the BMS literature, we differentiate the \textit{a priori} and \textit{a posterior} risk classification. The former is based on the risk characteristics of policyholders (or explanatory variables), which are observed at the moment of contract, and the latter is based on a residual effect, which cannot be explained by the a priori risk characteristics. For our purposes, we like to differentiate these two for the frequency and severity components, separately. Thus, we write
\[
\boldsymbol{X}^{[1]}=\left(X_{1}^{[1]}, \cdots, X_{d_1}^{[1]}\right)\quad\hbox{and}\quad\Theta^{[1]}
\]
to denote the a priori risk characteristics and the residual effect for the frequency and
\[
\boldsymbol{X}^{[2]}=\left(X_{1}^{[2]}, \cdots, X_{d_2}^{[2]}\right)\quad\hbox{and}\quad\Theta^{[2]}
\]
to denote the same quantities for the severity for the given policyholder. Superscripts $[1]$ and $[2]$ will be extensively used throughout the paper. As done in the literature, the a priori risk characteristics---$\boldsymbol{X}^{[1]}$ and $ \boldsymbol{X}^{[2]}$---are constant in time and may overlap; they share the same explanatory variables. Later, we will sometimes use an additional subscript $i$ for all the quantities introduced above to represent the $i$th policyholder, as needed. Regarding the a priori risk classification, the insurer is assumed to predetermine $\mathcal{K}$ risk classes based on the policyholders' risk characteristics. We will let $\left(\boldsymbol{x}_{\kappa}^{[1]}, \boldsymbol{x}_{\kappa}^{[2]}\right)$ define the $\kappa$th risk class and $w_\kappa$ be the weight of the risk class:
 \begin{equation}\label{eq.1}
 w_\kappa= \P{\boldsymbol{X}^{[1]}=\boldsymbol{x}_{\kappa}^{[1]},
 \boldsymbol{X}^{[2]}=\boldsymbol{x}_{\kappa}^{[2]}
 }, \quad \kappa = 1,\cdots, \mathcal{K}.
 \end{equation}
and sometimes it is convenient to define the frequency-only case
\[
w_\kappa^{[1]}=\P{\boldsymbol{X}^{[1]}=\boldsymbol{x}_{\kappa}^{[1]}}, \quad \kappa = 1,\cdots, \mathcal{K},
\]

For the statistical modeling of severity, we consider the {\it(reproductive) exponential dispersion family} (EDF) in \citet{Nelder1989}. The EDF is denoted by ${\rm ED}(\mu, \psi)$, where $\mu$ and $\psi$ represent the mean and dispersion parameter, respectively. It has the probability density/mass function of the form
\[
p(y\big\vert \theta,\psi)=\exp\left[(y\theta-b(\theta))/\psi \right]c(y,\psi),
\]
where $b(\cdot)$ and $c(\cdot)$ are predetermined functions, and $\theta$ is the canonical parameter. With such parametrization, the mean and variance are given as
$$\mu=\E{Y}=b^\prime(\theta),$$
and
$$\Var{Y}=b^{\prime\prime}(\theta)\psi\equiv V(\mu)\psi,$$
for some variance function $V(\cdot)$.
The inverse of $b^{'}(\cdot)$ is known as the link function, often denoted by $\eta(\cdot)$. One important property of the EDF used in this paper is the reproductive property, that is, for $Y_1, \cdots, Y_n\iid {\rm ED}(\mu, \psi)$, the sample average is distributed as
\[
\frac{1}{n}\sum\limits_{j=1}^{n}Y_j\sim {\rm ED}\left(\mu, \frac{\psi}{n}\right).
\]

\subsection{Frequency-based BMS}\label{sec.freq-only.BMS} 

In the classical BMS, both the a priori and posteriori classification are fully determined by the frequency while ignoring the severity. We present a statistical random effect model for this BMS, calling it ``the frequency model.'' For a given policyholder, the frequency model consists of several components, described as follows: \\

The frequency, $N_{t}$, of a given policyholder is specified using a count regression model conditioning on $\boldsymbol{X}^{[1]}=\boldsymbol{x}_\kappa^{[1]}$
 and $\Theta^{[1]}=\theta^{[1]}$, that is
 \begin{equation}\label{eq.402}
 N_{t}\big\vert \left(\boldsymbol{x}_\kappa^{[1]},\theta^{[1]}\right) \iid F_1\left(\cdot; \lambda_\kappa^{[1]}\theta^{[1]}, \psi^{[1]} \right), 
 \end{equation}
 for some discrete distribution $F_1$ with the mean parameter $\lambda_\kappa^{[1]}\theta^{[1]}$ and the other parameter $\psi^{[1]}$. Following the convention, the a priori rate is modeled as $\lambda_\kappa^{[1]}= \left(\eta^{[1]}\right)^{-1}\left(\boldsymbol{x}_\kappa^{[1]}\boldsymbol{\beta}^{[1]}\right)$, where $\boldsymbol{\beta}^{[1]}$ is the regression coefficient vector to be estimated from the data. Hence, having information on $\boldsymbol{x}_\kappa^{[1]}$ is equivalent to knowing $\lambda_\kappa^{[1]}$, provided that the model has been fully specified. For the brevity of exposition, we will use $\lambda_\kappa^{[1]}$ instead of $\boldsymbol{x}_\kappa^{[1]}$ when the a priori characteristics $\boldsymbol{x}_\kappa^{[1]}$ appears as a conditional quantity; for example, $\E{N_{T+1}\big\vert \lambda_\kappa^{[1]}, \theta^{[1]}}$ will be used instead of $\E{N_{T+1}\big\vert \boldsymbol{x}_\kappa^{[1]}, \theta^{[1]}}$.
 Note that $\lambda_\kappa^{[1]}$ and $\theta^{[1]}$ vary over policyholders, whereas $\psi^{[1]}$ remains constant, common for all policyholders. As such, we will suppress this common parameter for quantities that do not vary over individual policyholders. In a special case where $F_1 \in$ EDF, $\psi^{[1]}$ becomes the common dispersion parameter.\\

By definition, the observations $N_t$ are conditionally independent given the a priori risk characteristics $\boldsymbol{x}_\kappa^{[1]}$ and residual (or random) effect $\theta^{[1]}$, but are dependent unconditionally.
We assume that $\Theta^{[1]}$ is distributed as
 \begin{equation}\label{eq.403}
 \Theta^{[1]}\iid G_1,
 \end{equation}
 where $G_1$ is a proper distribution function with fixed parameters.
Individual policyholders have different realizations of $\Theta^{[1]}$, which capture the residual heterogeneity; it is assumed that $ \E{\Theta^{[1]}} =1 $ for convenience, following the convention.\\

The ideal premium for the next year in this frequency model is the hypothetical mean, also known as individual premium in the credibility literature, which is written as
\[\E{N_{T+1}\big\vert \lambda_\kappa^{[1]}, \theta^{[1]}}.\]
This, however, can be numerically obtained only when $(\Lambda^{[1]}, \Theta^{[1]})$ are known.
While $\Lambda^{[1]}$ is known at the outset of the contract and remains fixed over time, the value of $\Theta^{[1]}$ is never known, as it is a random variable. The solution is then to revise the distribution parameters of $\Theta^{[1]}$ based on the past claim frequency history $\mathcal{F}_{T}^{\rm [freq]}$ and compute the predictive mean $\E{N_{T+1}\big\vert \lambda_\kappa^{[1]}, \mathcal{F}_{T}^{\rm [freq]} }$. This is called the Bayesian premium.\\

Although the Bayesian premium is deemed to be the best premium possible, most ratemaking systems in the auto insurance industry maintain a BMS, which can be considered a discrete approximation of the Bayesian premium, as explained below.
The classical BMS constitutes three elements: the BMS levels, transition rule, and relativities. Without loss of generality, let us assume that the insurer has $z+1$ BM levels varying from $0$ to $z$, with $-1/+h$ system as a transition rule. Thus, in this system, a claim-free year is rewarded by moving one BM level downward, while each reported claim, regardless of the claim size, is penalized by climbing $h$ levels per claim at the end of the period. Theoretically, the choice of the number of BM levels and the transition rule are arbitrary, although insurers may have constraints subject to local regulations in practice. The transition rule is important in formal analyses of any BMS, since it affects how the drivers are populated over $z+1$ different BM levels, which, in turn, determines the optimal premium of a driver in each level for the following year. \\

The last element of the classical BMS is the relativity. In the classical frequency-driven BMS, the relativity associated with BM level $\ell$, denoted by ${r}^{\rm [freq]}(\ell)$, is used to predict the frequency in the next year through the multiplicative form
\begin{equation}\label{eq.b3}
\lambda_\kappa^{[1]}r^{\rm [freq]}(\ell), \qquad \ell=0,...,z,
\end{equation}
where the a priori rate
 $\lambda_{\kappa}^{[1]}= \left(\eta^{[1]}\right)^{-1}\left(\boldsymbol{x}_{\kappa}^{[1]}\boldsymbol{\beta}^{[1]}\right)$ reflects the risk class to which the policyholder belongs.
The expression (\ref{eq.b3}) is motivated from the following minimization by \citet{Chong}:
\begin{equation}\label{eq.h.5}
\begin{aligned}
(\tilde{r}^{\rm [freq]}(0), \cdots, \tilde{r}^{\rm [freq]}(z))&=\argmin_{({r}(0), \cdots, {r}(z))\in \mathbb{R}^{{z+1}}} \E{\left(
\E{N_{T+1}\big\vert \Lambda^{[1]}, \Theta^{[1]}}-\Lambda^{[1]}{r}(L)\right)^2}\\
&=\argmin_{({r}(0), \cdots, {r}(z))\in \mathbb{R}^{{z+1}}} \sum\limits_{\ell=0}^{z}\E{\left(
\Lambda^{[1]}\Theta^{[1]} -\Lambda^{[1]}{r}(L)\right)^2|L=\ell}\P{L=\ell},\\
\end{aligned}
\end{equation}
whose solution, or optimal relativity, is given by
\begin{equation}\label{eq.60}
\tilde{r}^{\rm [freq]}(\ell)=\frac{\E{ \left( \Lambda^{[1]} \right)^2 \Theta^{[1]} \big\vert L=\ell}}{\E{\left( \Lambda^{[1]} \right)^2\big\vert L=\ell}}
=\frac{ \sum_{\kappa \in \mathcal{K}} w_\kappa^{[1]} \left(\lambda_\kappa^{[1]}\right)^2\, \int \theta^{[1]}\, \pi_\ell(\lambda_\kappa^{[1]}\theta^{[1]}) g_1(\theta^{[1]}){\rm d}\theta^{[1]} }
{\sum_{\kappa \in \mathcal{K}} w_\kappa^{[1]} \left(\lambda_\kappa^{[1]}\right)^2\, \int \pi_\ell(\lambda_\kappa^{[1]}\theta^{[1]})g(\theta^{[1]}){\rm d}\theta^{[1]} }.
\end{equation}

Thus, for a driver currently in BM level $\ell$, one can use $\lambda_\kappa^{[1]}\tilde{r}^{\rm [freq]}(\ell)$ as an optimal predictor of the next year's frequency. We have two comments on this result: First, the optimal set of relativities are designed to minimize the expected squared error of the individual premium $\E{N_{T+1}\big\vert \lambda_\kappa^{[1]}, \theta^{[1]}}$. In this sense, the values $\lambda_\kappa^{[1]}\tilde{r}^{\rm [freq]}(\ell)$, $\ell=0,...,z$, can be considered an approximated prediction of the individual premium via $z+1$ categories. Furthermore, for a specific policyholder, if we incorporate her claim history and let $\Theta^{[1]}$ be revised accordingly, the resulting $\lambda_\kappa^{[1]}\tilde{r}^{\rm [freq]}(\ell)$, $\ell=0,...,z$, values can also be seen as an approximation of the Bayesian premium that fluctuates over time. That is, we can interpret $\tilde{r}^{\rm [freq]}(\ell)$ as the adjustment term based on the claim history because the BMS premium (\ref{eq.b3}) is analogous to the following Bayesian premium:
\begin{equation}\label{eq.a1}
\begin{aligned}
\E{N_{T+1}\big\vert \lambda_\kappa^{[1]}, \mathcal{F}_{T}^{\rm [freq]} }
&=\lambda_\kappa^{[1]}\E{\Theta^{[1]}\big\vert \lambda_\kappa^{[1]}, \mathcal{F}_{T}^{\rm [freq]} }.
\end{aligned}
\end{equation}
Second, we note that the optimal relativity depends on the form of the object function to minimize; the frequency model as shown in (\ref{eq.h.5}) targets the hypothetical frequency mean $\E{N_{T+1}\big\vert \Lambda^{[1]}, \Theta^{[1]}}$. Later, we will consider different object functions.\\

Before closing this section, we briefly discuss the Markov chain embedded in this classical BMS. Let us denote the transition probability of a policyholder with an expected frequency $\lambda_\kappa^{[1]}\theta^{[1]}$
from BM level $\ell$ in the current year to BM level $\ell^*$ in the next year under $-1/+h$ transition rule by
 \begin{equation}\label{eq.c1}
 p_{\ell, \ell^*}(\lambda_{\kappa}^{[1]}\theta^{[1]})
 \end{equation}
 using the realizations of the parameters associated with the frequency.
As the BM level in the next period is completely characterized by
the current BM level and the number of reported claims in the current period, the BMS can be naturally represented by a Markov chain whose transition matrix elements are given in (\ref{eq.c1}). For the given a priori rate $\lambda_\kappa^{[1]}$ and the residual effect $\theta^{[1]}$, we may denote the corresponding stationary distribution as
\begin{equation}\label{eq.h.2}
\boldsymbol{\pi}\left(\lambda_\kappa^{[1]}\theta^{[1]}\right)=\left(
\pi_0\left(\lambda_\kappa^{[1]}\theta^{[1]}\right), \cdots, \pi_z\left(\lambda_\kappa^{[1]}\theta^{[1]}\right)
\right)^T,
\end{equation}
where $\pi_\ell\left(\lambda_\kappa^{[1]}\theta^{[1]}\right)$ is the stationary distribution for the policyholder with expected frequency $\lambda_\kappa^{[1]}\theta^{[1]}$ to be in level $\ell$ induced by the transition probability in \eqref{eq.c1}.
When marginalized, we can find the unconditional stationary distribution as well. That is, if we use $L$ to denote the BM level occupied by a randomly selected policyholder in the steady state, the distribution of $L$ is expressed as
\begin{equation*}
\begin{aligned}
\P{L=\ell}
&=\sum\limits_{\kappa \in\mathcal{K}} w_{\kappa} \int \pi_\ell\left(\lambda_{\kappa}^{[1]}\theta \right) g_1(\theta){\rm d}\theta, \quad \hbox{for\quad$\ell=0, \cdots, {z}$},
\end{aligned}
\end{equation*}
where $g_1$ is the density function of $G_1$. This probability is used in the minimization (\ref{eq.h.5}).
Notably, as shown in the above equation, the distribution of $L$ depends on various factors including the distributional assumptions for the conditional distribution of frequency and random effect, transition rule, and stationary distribution.

\subsection{BMS based on both frequency and severity}\label{sec.freq-sev.BMS}

Though widely used in practice, the frequency-based BMS is less than optimal: It completely ignores the claim severity information. A more statistically solid approach would be based on the aggregate severity, or the total loss of the collective risk model, which reflects both claim frequency and severity.
In traditional collective risk models, the frequency and severity are assumed independent. Although, a series of recent studies shows that this unwarranted assumption can be relaxed in the collective risk model, and, thus, further in the Bayesian premium and BMS. In fact, the dependence between the frequency and severity in auto insurance is shown to be statistically significant in recent empirical studies such as \citet{Frees4} and \citet{Garrido}. \\

 While various dependent collective risk models exist as mentioned in Introduction, including the two-step frequency--severity model and the copula-based models, we find that the bivariate random effect model is most convenient for the development of a better BMS because of its conditioning structure in the model specification. In this section, we focus on the model by \citet{PengAhn}, as it offers the most comprehensive bivariate random effect model: It contains other existing bivariate random effect models in the literature---\citet{Pinquet, Cossette, Bastida, Czado2015, Lu2016, JKWoo}---as special cases.\\

Paralleling the frequency model in section \ref{sec.freq-only.BMS}, the dependent collective risk model of \citet{PengAhn} is specified as follows. We call this ``the frequency--severity model,'' as it requires both frequency and severity models. For the given a priori rate $(\Lambda^{[1]}, \Lambda^{[2]})=(\lambda_\kappa^{[1]}, \lambda_\kappa^{[2]})$ and residual effect $(\Theta^{[1]}, \Theta^{[2]})=(\theta^{[1]}, \theta^{[2]})$, the observations
 \[
 N_{t}, \quad \hbox{and}\quad Y_{t,j}\quad\hbox{for}\quad t,j\in\mathbb{N}
 \]
 are independent. In the frequency--severity model, the distributional assumption for $N_t$ is identical to the frequency model in section \ref{sec.freq-only.BMS}.
 The individual severity $Y_{t,j}$ is specified using a regression model conditioning on the a priori risk characteristics $\boldsymbol{X}^{[2]}=\boldsymbol{x}_\kappa^{[2]}$, or equivalently $ \Lambda^{[2]}=\lambda_\kappa^{[2]}$ as previously mentioned, and the residual effect $\Theta^{[2]}$, such that
 \begin{equation}\label{eq.y}
 Y_{t,j}\big\vert \left( \lambda_\kappa^{[2]}, \theta^{[2]}\right) \iid {\rm EDF}\left( \lambda_\kappa^{[2]}\theta^{[2]}, \psi^{[2]}\right),
 \end{equation}
 where the EDF has mean parameter $\lambda_\kappa^{[2]}\theta^{[2]}$ with $\lambda_\kappa^{[2]}=\left(\eta^{[2]}\right)^{-1}\left(\boldsymbol{x}_\kappa^{[2]} \boldsymbol{\beta}^{[2]}\right)$
 and dispersion parameter $\psi^{[2]}$.
 Finally, the residual effect characteristic $\left(\Theta^{[1]},\Theta^{[2]}\right)$, which is assumed to be independent of the a priori rate $(\Lambda^{[1]}, \Lambda^{[2]})$,
 has a joint distribution $H$. That is,
 \begin{equation}\label{eq.4}
 \left(\Theta^{[1]}, \Theta^{[2]} \right)\iid H=C\left(G_1, G_2 \right),
 \end{equation}
 where $G_1$ and $G_2$ denote the marginal distribution functions for $\Theta^{[1]}$ and $\Theta^{[2]}$, respectively, and $C$ is a bivariate copula that defines the dependence structure in $H$.
 We use $g_1$, $g_2$, and $h$ to denote the density functions of $G_1$, $G_2$, and $H$, respectively.
As before, it is further assumed
\[
 \E{\Theta^{[1]}} = \E{\Theta^{[2]}} = 1
 \]
for convenience.\\

There are several conditional independence relations under this frequency--severity model specification. For example, given $(\lambda_\kappa^{[1]}, \lambda_\kappa^{[2]})$, and $(\theta^{[1]}, \theta^{[2]})$, the claim history
 $ \left(N_{t}, \boldsymbol{Y}_{t}\right) $ of a given policyholder
 is independent across time, and independence between frequency and individual severities and independence among individual severities are further implied.
However, marginally or unconditionally, various dependence relations are observed:
\begin{itemize}
 \item Dependence among frequencies induced by residual effect $\Theta^{[1]}$;
 \item Dependence among severities induced by residual effect $\Theta^{[2]}$;
 \item Dependence between frequency and severity induced by residual effects $(\Theta^{[1]},\Theta^{[2]}$) via the copula $C$.
\end{itemize}

Recalling that the individual severity $Y_{t,j}$ is assumed EDF distributed, we note that the conditional average severity is again EDF distributed. That is,
 \begin{equation}\label{eq.m}
 M_{t}\big\vert \left(\lambda_\kappa^{[1]}, \lambda_\kappa^{[2]},\theta^{[1]},\theta^{[2]}, N_{t}\right) \iid {\rm ED}\left( \lambda_\kappa^{[2]}\theta^{[2]}, \, \psi^{[2]}/N_{t}\right), \quad N_{t}>0,
 \end{equation}
 and
 \[
 \P{M_{t}=0\big\vert \lambda_\kappa^{[2]},\theta^{[2]}, N_{t} }=1, \quad \text{ if }N_{t}=0,
 \]
 where the mean parameter of the EDF, $\lambda_\kappa^{[2]}\theta^{[2]}$, is modeled through $\lambda_\kappa^{[2]}=\left(\eta^{[2]}\right)^{-1}\left(\boldsymbol{x}_\kappa^{[2]} \boldsymbol{\beta}^{[2]}\right)$. Thus, the distribution of $M_t$ is the same as that of $Y_{t,j}$, except for the dispersion parameter. This allows us to efficiently model the aggregate severity $S_t$ as a product form $N_t M_t$, whose mean serves as the premium under the frequency--severity model. There are two noteworthy sub-modes of the frequency--severity model. First, it reduces to the frequency model in section \ref{sec.freq-only.BMS} by removing the severity part. Second, if the copula $C$ embedded in $H$ is an independent copula, then the frequency and severity can be independently modeled, leading to the standard collective risk model.\\

We found that the optimal frequency-premium takes a form of $\lambda_\kappa^{[1]}\tilde{r}^{\rm [freq]}(\ell)$ in the frequency model from (\ref{eq.b3}). Now, suppose we predict the aggregate severity as the premium using the history of the frequency only. A simple, but widely, used approach in the literature \citep{Denuit2} is to multiply
the average severity of all policyholders in the same risk class to the frequency such that the BMS premium is predicted with
\begin{equation}
\label{eq.h.9}
\lambda_{\kappa}^{[1]}\tilde{r}^{\rm [freq]}(\ell) \, \times \, \lambda_{\kappa}^{[2]}.
\end{equation}
This multiplicative form is less than ideal in at least two aspects. First, by incorporating the average severity of the entire portfolio, it reflects the history of the frequency of a given policyholder, but completely ignores the past severity records of individual policyholders. Second, the multiplicative form can be justified only when the independence of the frequency and severity is warranted, which is not valid in general. In fact, the BMS premium (\ref{eq.h.9}) is motivated from the following Bayesian premium with the implied independence between frequency and severity:
\begin{equation}\label{eq.h.8}
\begin{aligned}
&\E{S_{T+1}\big\vert \lambda_\kappa^{[1]}, \lambda_\kappa^{[2]}, \mathcal{F}_{T}^{\rm [freq]} }\\
&
=\E{N_{T+1}
\E{M_{T+1} \big\vert \lambda_\kappa^{[1]}, \lambda_\kappa^{[2]}, \Theta^{[1]}, \Theta^{[2]}, \mathcal{F}_{T}^{\rm [freq]}, N_{T+1} }
\big\vert \lambda_\kappa^{[1]}, \lambda_\kappa^{[2]}, \mathcal{F}_{T}^{\rm [freq]} }\\
&=\E{N_{T+1}\Theta^{[2]}
\big\vert \lambda_\kappa^{[1]}, \lambda_\kappa^{[2]}, \mathcal{F}_{T}^{\rm [freq]} }\times \lambda_\kappa^{[2]}\\
&=\lambda_\kappa^{[1]}\E{\Theta^{[1]}\Theta^{[2]}\big\vert \lambda_\kappa^{[1]}, \lambda_\kappa^{[2]}, \mathcal{F}_{T}^{\rm [freq]} }\times \lambda_\kappa^{[2]}\\
&=\lambda_\kappa^{[1]}\E{\Theta^{[1]}
\big\vert \lambda_\kappa^{[1]}, \lambda_\kappa^{[2]}, \mathcal{F}_{T}^{\rm [freq]}}
\times \lambda_\kappa^{[2]}\E{\Theta^{[2]}\big\vert \lambda_\kappa^{[1]}, \lambda_\kappa^{[2]}, \mathcal{F}_{T}^{\rm [freq]} }
\\
&=\lambda_\kappa^{[1]}\E{\Theta^{[1]}\big\vert \lambda_\kappa^{[1]}, \lambda_\kappa^{[2]}, \mathcal{F}_{T}^{\rm [freq]} } \, \times \, \lambda_\kappa^{[2]},
\end{aligned}
\end{equation}
where the first and third equalities are from the law of total expectation and the fourth and fifth equalities are from the independence between $\Theta^{[1]}$ and $\Theta^{[2]}$.
From this, the relativity $\tilde{r}^{\rm [freq]}(\ell)$ in \eqref{eq.h.9} can be interpreted as the BMS version of the posterior estimation
\[\E{\Theta^{[1]}\big\vert \lambda_\kappa^{[1]}, \lambda_\kappa^{[2]}, \mathcal{F}_{T}^{\rm [freq]} }\]
in \eqref{eq.h.8}.
Therefore, while convenient and simple, the BMS premium \eqref{eq.h.9} may result in a considerable bias when dependence between frequency and severity is strong. To adjust such bias, \citet{PengAhn} suggest the following BMS premium from for a policyholder at BM level $\ell$ in a $-1/+h$ system under the frequency--severity model:
\begin{equation}\label{eq.h.10}
\lambda_{\kappa}^{[1]}\lambda_{\kappa}^{[2]} \tilde{r}^{\rm [dep]}(\ell), \qquad \ell=0,\cdots,z,
\end{equation}
where optimal relativity $\tilde{r}^{\rm [dep]}(\ell)$ for $\ell=0,\cdots,z$
is the solution of the optimization,
\begin{equation}\label{eq.h.11}
\begin{aligned}
(\tilde{r}^{\rm [dep]}(0), \cdots, \tilde{r}^{\rm [dep]}(z))&=\argmin_{({r}(0), \cdots, {r}(z))\in \mathbb{R}^{{z+1}}} \E{\left(
\E{S_{T+1}\big\vert \Lambda^{[1]}, \Lambda^{[2]}, \Theta^{[1]}, \Theta^{[2]} }-\Lambda^{[1]}\Lambda^{[2]}{r}(L)\right)^2}\\
&=\argmin_{({r}(0), \cdots, {r}(z))\in \mathbb{R}^{{z+1}}}
\sum\limits_{\ell=0}^{z}\E{\left(
\Lambda^{[1]}\Lambda^{[2]}\Theta^{[1]}\Theta^{[2]} -\Lambda^{[1]}\Lambda^{[2]}{r}(L)\right)^2|L=\ell}\P{L=\ell},\\
\end{aligned}
\end{equation} which is an extended setting of that in \citet{Chong}.
The solution of \eqref{eq.h.11} can be obtained as
 \begin{equation}\label{eq.160}
\tilde{r}^{\rm [dep]}({l})=\frac{\E{\left( \Lambda^{[1]}\Lambda^{[2]}\right)^2 \Theta^{[1]}\Theta^{[2]}\bigg\vert L=\ell}}{\E{\left(\Lambda^{[1]} \Lambda^{[2]}\right)^2\big\vert L=\ell}}
\quad\hbox{for}\quad \ell=0, \cdots,{z},
\end{equation}
where the numerator and denominator can be analytically expressed as
\begin{equation}\label{eq.158}
\E{\left( \Lambda^{[1]}\Lambda^{[2]}\right)^2 \Theta^{[1]}\Theta^{[2]}\bigg\vert L=\ell}=\frac{\sum\limits_{\kappa\in\mathcal{K}} w_{\kappa} \left(\lambda_{\kappa}^{[1]}\lambda_{\kappa}^{[2]}\right)^2 \int\int \theta^{[1]}\theta^{[2]} \pi_\ell(\lambda_{\kappa}^{[1]}\theta^{[1]}) h(\theta^{[1]}, \theta^{[2]}){\rm d}\theta^{[1]}{\rm d} \theta^{[2]}}
{\sum\limits_{\kappa\in\mathcal{K}}w_{\kappa} \int \pi_\ell(\lambda_{\kappa}^{[1]}\theta^{[1]}) g_1(\theta^{[1]}){\rm d}\theta^{[1]} }
\end{equation}
and
\begin{equation}\label{eq.159}
\E{\left( \Lambda^{[1]}\Lambda^{[2]}\right)^2\big\vert L=\ell} = \frac{\sum\limits_{\kappa\in\mathcal{K}} w_{\kappa} \left(\lambda_{\kappa}^{[1]}\lambda_{\kappa}^{[2]}\right)^2 \int \pi_\ell(\lambda_{\kappa}^{[1]}\theta^{[1]}, \psi^{[1]})g_1(\theta^{[1]}){\rm d}\theta^{[1]}}
{\sum\limits_{\kappa\in\mathcal{K}} w_{\kappa} \int \pi_\ell(\lambda_{\kappa}^{[1]}\theta^{[1]}, \psi^{[1]}) g_1(\theta^{[1]}){\rm d}\theta^{[1]} },
\end{equation}
respectively, where $\pi_\ell$ is a stationary distribution defined in \eqref{eq.h.2}.
 We refer to \citet{PengAhn} for the detailed derivation of \eqref{eq.160}.\\

 Similar to \eqref{eq.h.9}, the premium \eqref{eq.h.10} can be viewed as the BMS version of the Bayesian premium from
 \begin{equation}\label{kim.1}
\begin{aligned}
&\E{S_{T+1}\big\vert \lambda_\kappa^{[1]}, \lambda_\kappa^{[2]}, \mathcal{F}_{T}^{\rm [freq]} }\\
&
=\E{N_{T+1}
\E{M_{T+1} \big\vert \lambda_\kappa^{[1]}, \lambda_\kappa^{[2]}, \Theta^{[1]}, \Theta^{[2]}, \mathcal{F}_{T}^{\rm [freq]}, N_{T+1} }
\big\vert \lambda_\kappa^{[1]}, \lambda_\kappa^{[2]}, \mathcal{F}_{T}^{\rm [freq]} }\\
&= \lambda_\kappa^{[2]} \E{N_{T+1}\Theta^{[2]}
\big\vert \lambda_\kappa^{[1]}, \lambda_\kappa^{[2]}, \mathcal{F}_{T}^{\rm [freq]} }\\
&=\lambda_\kappa^{[1]}\lambda_\kappa^{[2]}
\E{\Theta^{[1]}\Theta^{[2]}\big\vert \lambda_\kappa^{[1]}, \lambda_\kappa^{[2]}, \mathcal{F}_{T}^{\rm [freq]} }\\
\end{aligned}
\end{equation}
and, consequently, the relativity $\tilde{r}^{\rm [dep]}(\ell)$ in \eqref{eq.h.10} can be interpreted as the BMS version of the posterior estimation

\begin{equation}
\label{kim.11}
\E{\Theta^{[1]}\Theta^{[2]}\big\vert \lambda_\kappa^{[1]}, \lambda_\kappa^{[2]}, \mathcal{F}_{T}^{\rm [freq]} }.
\end{equation}
Notably, the BMS premium of \citet{PengAhn} operates under the $-1/+h$ system, where the BM level transition is driven by the claim frequency only and is unable to penalize high severity claims. Also, as seen from (\ref{kim.11}), the severity history is not used inside the condition. In reality, the severity information is indirectly incorporated only through the dependence between $\Theta^{[1]}$ and $\Theta^{[2]}$. Thus, the BMS premium of \citet{PengAhn} approximates the Bayesian premium, but in a limited sense, because it incorporates the history of claim frequency and not severity. Later, we propose to improve the BMS premium prediction with a direct incorporation of the history of severity as well.

\subsection{Comparative summary so far}
To provide a broader view, it is informative to compare the BMS models considered in this section. In the following, the BMS models are listed in an increasing order with later models being more sophisticated.
\begin{enumerate}
 \item The frequency-only BMS in section \ref{sec.freq-only.BMS}. This simplest BMS mainly targets the prediction of frequency. At the outset of the contract, a priori risk characteristics $\boldsymbol{x}_\kappa^{[1]}$ are used to calibrate the a priori rate, and the policyholder's specific residual effect $\Theta^{[1]}$ is periodically readjusted based on the claim frequency history $\mathcal{F}_{T}^{\rm [freq]}$. The optimal BMS predictor of the next year's frequency is $\lambda_\kappa^{[1]}\tilde{r}^{\rm [freq]}(\ell)$ in
 (\ref{eq.b3}).
 \item The simple BMS predicting the aggregate severity under the independence between frequency and severity, or equivalently the independence $\Theta^{[1]}$ and $\Theta^{[2]}$, under the frequency--severity model in section \ref{sec.freq-sev.BMS}.
 In this model, the estimator of $\Theta^{[1]}$ is periodically readjusted based on the claim frequency history $\mathcal{F}_{T}^{\rm [freq]}$. However, the severity history is ignored, which implicitly leads the predictor of $\Theta^{[2]}$ being equal to its mean, 1, a fixed constant. The optimal BMS predictor of the next year's premium is
$\lambda_{\kappa}^{[1]} \lambda_{\kappa}^{[2]}\tilde{r}^{\rm [freq]}(\ell)$ because of the independence between frequency and severity, as shown in (\ref{eq.h.9}).

 \item The BMS predicting the aggregate severity with dependence between frequency and severity, proposed by \citet{PengAhn}. This model extends the previous model; the optimal BMS takes the form $\lambda_{\kappa}^{[1]}\lambda_{\kappa}^{[2]} \tilde{r}^{\rm [dep]}(\ell)$ as shown in (\ref{eq.h.10}), where $\tilde{r}^{\rm [dep]}(\ell)$ is the relativity that accommodates the dependence between $\Theta^{[1]}$ and $\Theta^{[2]}$. In terms of the Bayesian premium framework, in this model, the predictor of $\Theta^{[1]}$ is periodically readjusted based on $\mathcal{F}_{T}^{\rm [freq]}$. The revised $\Theta^{[1]}$, in turn, affects $\Theta^{[2]}$ through the dependence structure in $H$, the joint distribution of $( \Theta^{[1]},\Theta^{[2]})$. The severity history is, however, still ignored, as the transition rule is purely driven by frequency.

\end{enumerate}

\section{Motivation for transition rules driven by severity history}\label{sec.4}
In this section, we investigate a hypothetical frequency--severity model where closed-form expressions are available for various quantities. By analyzing this model analytically, we derive various Bayesian premiums in the presence of claim history. In particular, we show that Bayesian premiums vary not only by the target variable (i.e., the frequency or the aggregate severity), but also the type of claim history incorporated.\\

In light of the analogy between the Bayesian premium and the BMS premium, a major implication of this exercise is the need for a BMS that has a transition rule operated by the past claim history of both frequency and severity. We thus conclude that, while the optimal relativities in \eqref{eq.h.11} provide the best prediction under the BMS, where the transition is only ruled by claim frequency, considering severity in the transition rule might improve the prediction ability of the BMS.

\subsection{The model}
The model we consider is a frequency--severity model with the distribution of frequency given by
\[
 N_{t}\big\vert \left(\lambda_\kappa^{[1]},\lambda_\kappa^{[2]},\theta^{[1]},\theta^{[2]}\right) \iid {\rm Pois}\left( \lambda_\kappa^{[1]}\theta^{[1]}\right), 
\]
and the individual severities by
\begin{equation}\label{eq.21}
 Y_{t,j}\big\vert \left(\lambda_\kappa^{[1]},\lambda_\kappa^{[2]},\theta^{[1]},\theta^{[2]}\right) \iid {\rm Pois}\left( \lambda_\kappa^{[2]}\theta^{[2]}\right).
\end{equation}
 For the residual effect characteristics, the following joint density of $(\Theta^{[1]}, \Theta^{[2]} )$ is used
\begin{equation}\label{eq.a.1}
h(\theta_1,\theta_2)=c_0 c_1\exp(-c_1\theta_1)c_1\exp(-c_1\theta_2) +(1-c_0)c_2\exp(-c_2\theta_1)c_2\exp(-c_2\theta_2)
\end{equation}
whose corresponding marginal distributions $G_1$ and $G_2$ are a mixture of exponential distribution with the identical density
\[
g_1(\theta)=g_2(\theta)=c_0c_1\exp(-c_1\theta) + (1-c_0)c_2\exp(-c_2\theta), \quad c_1, c_2>0\quad\hbox{and}\quad c_0\in[0,1].
\]
A similar joint distribution, based on conjugacy, appears in \citet{Lu2016} and \citet{JKWoo}. The independence between $\Theta^{[1]}$ and $\Theta^{[2]}$ corresponds to the boundary values of $c_0$ range, that is, $c_0=0$ or $1$, in which case the joint density (\ref{eq.a.1}) reduces to the product of two exponential densities. For $0<c_0<1$, the dependence is positive. It is further assumed that
\[
c_0\frac{1}{c_1}+(1-c_0)\frac{1}{c_2}=1
\]
such that the mean of each marginal distribution is 1. Finally, for the convenience of interpretation, we assume $c_1>c_2$ such that $\E{\Theta^{[1]}}<\E{\Theta^{[2]}}$.
Note that \eqref{eq.21} implies
\[
 S_t\big\vert \left(\lambda_\kappa^{[1]},\lambda_\kappa^{[2]},\theta^{[1]},\theta^{[2]}, N_{t}\right) \iid {\rm Pois}\left( \lambda_\kappa^{[2]}\theta^{[2]}N_{t}\right).
\]
This model, without the mixture structure and the explanatory variables, is known as Neyman Type A distribution in the collective risk model literature.

\subsection{Premium under frequency-only model with $\mathcal{F}_{T}^{\rm [freq]}$}
We start with the prediction under the classical frequency model discussed in section \ref{sec.freq-only.BMS}. This model is a special case of the frequency--severity model with the severity component ignored.
The Bayesian predictor of the frequency, of Bayesian frequency-premium, based on the past claim history of frequency $\mathcal{F}_{T}^{\rm [freq]}$ can be obtained as
\begin{equation*}
\begin{aligned}
\E{N_{T+1}\big\vert \lambda_\kappa^{[1]}, \lambda_\kappa^{[2]}, \mathcal{F}_{T}^{\rm [freq]}}
&=\E{\E{N_{T+1}\big\vert \lambda_\kappa^{[1]},\lambda_\kappa^{[2]},\Theta^{[1]},\Theta^{[2]}, \mathcal{F}_{T}^{\rm [freq]}}\Big\vert \lambda_\kappa^{[1]}, \lambda_\kappa^{[2]}, \mathcal{F}_{T}^{\rm [freq]} }\\
&=\lambda_\kappa^{[1]}\E{\Theta^{[1]}\big\vert \lambda_\kappa^{[1]}, \lambda_\kappa^{[2]}, \mathcal{F}_{T}^{\rm [freq]}},
\end{aligned}
\end{equation*}
which can take a more specific form as
\begin{equation}\label{eq.b0}
\E{N_{T+1}\big\vert \lambda_\kappa^{[1]}, \lambda_\kappa^{[2]}, \mathcal{F}_{T}^{\rm [freq]}}
=\lambda_\kappa^{\rm [1]}\left[\frac{a_1}{a_1+a_2}\frac{1+\sum_{t=1}^{T} n_t}{c_1} +
\frac{a_2}{a_1+a_2}\frac{1+\sum_{t=1}^{T} n_t}{c_2}\right],
\end{equation}
where
\[
a_1=\frac{\Gamma\left(1+\sum_{t=1}^{T} n_t \right)c_0}{c_1^{\sum_{t=1}^{T} n_t}}\quad\hbox{and}\quad
a_2=\frac{\Gamma\left(1+\sum_{t=1}^{T} n_t \right)(1-c_0)}{c_2^{\sum_{t=1}^{T} n_t}}.
\]
We see that all quantities are functions of the past claim frequencies $n_1,...,n_T$.

\subsection{Premium under frequency--severity model with $\mathcal{F}_{T}^{\rm [freq]}$ }
Let us turn to the Bayesian premium under the frequency--severity model introduced in section \ref{sec.freq-sev.BMS}, with only the past history of claim frequency $\mathcal{F}_{T}^{\rm [freq]}$ incorporated.
The premium in this model is the aggregate severity of the next year. We consider two different scenarios. First, we consider the case where frequency and severity are independent, corresponding to $c_0=0$ or $1$. In this case, the Bayesian premium can be obtained as follows:
\begin{align}
\nonumber & \E{S_{T+1}\big\vert \lambda_\kappa^{[1]}, \lambda_\kappa^{[2]}, \mathcal{F}_{T}^{\rm [freq]}} \\
\nonumber & = \E{N_{T+1}\E{M_{T+1}\big\vert \lambda_\kappa^{[1]}, \lambda_\kappa^{[2]}, \Theta^{[1]}, \Theta^{[2]}, \mathcal{F}_{T}^{\rm [freq]}, N_{T+1} }\Big\vert \lambda_\kappa^{[1]}, \lambda_\kappa^{[2]}, \mathcal{F}_{T}^{\rm [freq]} }\\
\nonumber &= \E{ \lambda_\kappa^{\rm [1]}\lambda_\kappa^{\rm [2]} \Theta^{[1]} \Theta^{[2]}\Big\vert \lambda_\kappa^{[1]}, \lambda_\kappa^{[2]}, \mathcal{F}_{T}^{\rm [freq]}}\\
\nonumber &= \lambda_\kappa^{\rm [1]}\lambda_\kappa^{\rm [2]} \E{ \Theta^{[1]} \Big\vert \lambda_\kappa^{[1]}, \lambda_\kappa^{[2]}, \mathcal{F}_{T}^{\rm [freq]}} \E{\Theta^{[2]}}\\
\label{eq.b1} & =\lambda_\kappa^{\rm [1]}\lambda_\kappa^{\rm [2]}\left[\frac{a_1}{a_1+a_2}\frac{1+\sum_{t=1}^{T} n_t}{c_1} +\frac{a_2}{a_1+a_2}\frac{1+\sum_{t=1}^{T} n_t}{c_2}\right],
\end{align}
where the first and the second equalities are from the law of total expectation and the third equality holds from the independence. The last equality uses $\E{\Theta^{[2]}}=1$.
At this point the analogy between the Bayesian premium and the BMS premium is most relevant and revealing. In this independence case, the Bayesian premium \eqref{eq.b1} is analogous to the BMS premium $\lambda_{\kappa}^{[1]} \lambda_{\kappa}^{[2]} \tilde{r}^{\rm [freq]}(\ell)$ in \eqref{eq.h.9}. Both premiums take the same multiplicative form, and $\tilde{r}^{\rm [freq]}(\ell)$ plays the role of the last term of \eqref{eq.b1}. In particular, we see that the Bayesian premium \eqref{eq.b1} is obtained by just multiplying $\Lambda^{[2]}$ to the Bayesian predictor of frequency \eqref{eq.b0}, which is completely parallel to the BMS premium obtained by multiplying $\lambda_{\kappa}^{[2]}$ to the frequency-only BMS premium $\lambda_\kappa^{[1]}r^{\rm [freq]}(\ell)$ in \eqref{eq.b3}. However, setting the aggregate severity premiums by simply multiplying the severity mean to the frequency-driven premium is invalid unless the independence assumption between frequency and severity is warranted. \\

In the second scenario, we consider the case where the frequency and severity are dependent, corresponding to $0<c_0<1$ in the model.
After straightforward, but tedious, algebra, the Bayesian premium based on $\mathcal{F}_{T}^{\rm [freq]}$ then can be computed as follows:
\begin{align}
\nonumber \E{S_{T+1}\big\vert \lambda_\kappa^{[1]}, \lambda_\kappa^{[2]}, \mathcal{F}_{T}^{\rm [freq]}}
&=\E{\E{S_{T+1}\big\vert \mathcal{F}_{T}^{\rm [freq]}, \lambda_\kappa^{[1]}, \lambda_\kappa^{[2]}, \Theta^{[1]}, \Theta^{[2]}}\Big\vert \lambda_\kappa^{[1]}, \lambda_\kappa^{[2]}, \mathcal{F}_{T}^{\rm [freq]}}\\
\nonumber &=\lambda_\kappa^{[1]}\lambda_\kappa^{[2]}\E{\Theta^{[1]}\Theta^{[2]}\big\vert \lambda_\kappa^{[1]}, \lambda_\kappa^{[2]}, \mathcal{F}_{T}^{\rm [freq]}}\\
\label{eq.b2} &=\lambda_\kappa^{[1]}\lambda_\kappa^{[2]}\left[\frac{a_1^*}{a_1^*+a_2^*}\frac{1+\sum_{t=1}^{T} n_t}{c_1} +
\frac{a_2^*}{a_1^*+a_2^*}\frac{1+\sum_{t=1}^{T} n_t}{c_2}\right],
\end{align}
where
\[
a_1^*= \frac{\Gamma\left(1+\sum_{t=1}^{T} n_t \right)c_0}{c_1^{1+\sum_{t=1}^{T} n_t}}\quad\hbox{and}\quad
a_2^*= \frac{\Gamma\left(1+\sum_{t=1}^{T} n_t \right)(1-c_0)}{c_2^{1+\sum_{t=1}^{T} n_t}}.
\]

Note that, for $0<c_0<1$, we have $a_1^*<a_2^*$, which implies that the premium in \eqref{eq.b2} is larger than the premium in \eqref{eq.b1}. Hence, even though only the frequency history $\mathcal{F}_{T}^{\rm [freq]}$ is used, the Bayesian premium responds to the severity information through the dependence structure between the frequency $\Theta^{[1]}$ and severity $\Theta^{[2]}$, as previously mentioned. Similar to the independence case, we again verify the analogy between the Bayesian premium in \eqref{eq.b2} and the BMS premium of \citet{PengAhn} in \eqref{eq.h.10}.
More specifically, the optimal relativity $\tilde{r}^{\rm [dep]}({l})$ in (\ref{eq.160}) is analogous to the last term of (\ref{eq.b2}). We conclude that, even though no severity history is incorporated, both the Bayesian and BMS premiums can accommodate the severity component indirectly through the dependence structure between frequency and severity.

\subsection{Premium under frequency--severity model with $\mathcal{F}_{T}^{\rm [full]}$ }
\label{sec.4.3}
The most complete premium is the Bayesian premium incorporating both frequency and severity. For this, we use the same frequency--severity model as in the previous subsection. However, this time, we use the full history of the given policyholder, including both frequency and severity, denoted by $\mathcal{F}_{T}^{\rm [full]}$. The corresponding Bayesian premium is then
\begin{equation}\label{eq.22}
\begin{aligned}
\E{S_{T+1}\big\vert \lambda_\kappa^{[1]}, \lambda_\kappa^{[2]}, \mathcal{F}_{T}^{\rm [full]}}
&=\E{\E{S_{T+1}\big\vert \lambda_\kappa^{[1]}, \lambda_\kappa^{[2]}, \Theta^{[1]}, \Theta^{[2]}}, \mathcal{F}_{T}^{\rm [full]}\Big\vert \lambda_\kappa^{[1]}, \lambda_\kappa^{[2]}, \mathcal{F}_{T}^{\rm [full]}}\\
&=\lambda_\kappa^{[1]}\lambda_\kappa^{[2]}\E{\Theta^{[1]}\Theta^{[2]}\big\vert \lambda_\kappa^{[1]}, \lambda_\kappa^{[2]}, \mathcal{F}_{T}^{\rm [full]}}.
\end{aligned}
\end{equation}
After tedious algebraic calculations (shown in Appendix \ref{app.b}), the Bayesian premium can be written as
\begin{equation}\label{eq.26}
\begin{aligned}
&\E{S_{T+1}\big\vert \lambda_\kappa^{[1]}, \lambda_\kappa^{[2]}, \mathcal{F}_{T}^{\rm [full]}}\\
&\quad\quad\quad\quad\quad\quad=\lambda_\kappa^{[1]}\lambda_\kappa^{[2]}
\left[
a_1^{**}\frac{\sum_{t=1}^{T} n_t +1}{\lambda_\kappa^{[1]} T +c_1}
\frac{\sum_{t=1}^{T} s_t +1}{\lambda_\kappa^{[2]} \left(\sum_{t=1}^{T}n_t \right) +c_1}
+a_2^{**}\frac{\sum_{t=1}^{T} n_t +1}{\lambda_\kappa^{[1]} T +c_2}
\frac{\sum_{t=1}^{T} s_t +1}{\lambda_\kappa^{[2]} \left(\sum_{t=1}^{T}n_t \right) +c_2}
\right]
\end{aligned}
\end{equation}
with
\[
a_1^{**}=\frac{
c_0c_1^2
}
{
c_0c_1^2
+
(1-c_0)c_2^2\left(\frac{\lambda_\kappa^{[1]} T + c_1}{\lambda_\kappa^{[1]} T + c_2} \right)^{\sum_{t=1}^{T} n_t +1}
\left(\frac{\lambda_\kappa^{[2]} \left( {\sum_{t=1}^{T} n_t} \right) +c_1}
{\lambda_\kappa^{[2]} \left( {\sum_{t=1}^{T} n_t} \right) +c_2} \right)^{\sum_{t=1}^{T} s_t +1}
}
\]
and $a_2^{**}=1-a_1^{**}$. Clearly, the premium is an explicit function of past severity $s_t$ as well as past frequency $n_t$. Even when frequency and individual severities are independent, that is $c_0=0$ or $1$ in \eqref{eq.a.1}, we obtain
\begin{equation}\label{eq.27}
\E{S_{T+1}\big\vert \lambda_\kappa^{[1]}, \lambda_\kappa^{[2]}, \mathcal{F}_{T}^{\rm [full]}}=
\begin{cases}
\lambda_\kappa^{[1]}\lambda_\kappa^{[2]}\frac{\sum_{t=1}^{T} n_t +1}{\lambda_\kappa^{[1]} T +c_1}
\frac{\sum_{t=1}^{T} s_t +1}{\lambda_\kappa^{[2]} \left(\sum_{t=1}^{T}n_t \right) +c_1}, &c_0=1,\\
\lambda_\kappa^{[1]}\lambda_\kappa^{[2]}\frac{\sum_{t=1}^{T} n_t +1}{\lambda_\kappa^{[1]} T +c_2}
\frac{\sum_{t=1}^{T} s_t +1}{\lambda_\kappa^{[2]} \left(\sum_{t=1}^{T}n_t \right) +c_2}, &c_0=0,\\
\end{cases}
\end{equation}
which still is a function of the history of both frequency and severity.

\subsection{Implication to the BMS}
In view of the analogy between the Bayesian and BMS premium, we require a set of new relativities $\tilde{r}^{\rm [new]}(\ell)$ that incorporates both $n_t$ and $s_t$, or equivalently $\mathcal{F}_{T}^{\rm [full]}$, such that the BMS premium of the form
\begin{equation}
\label{new.BMS.1}
\lambda_{\kappa}^{[1]}\lambda_{\kappa}^{[2]} \tilde{r}^{\rm [new]}(\ell)
\end{equation}
can get as close as possible to the Bayesian premiums \eqref{eq.26} and \eqref{eq.27} with respect to the expected squared error. Unfortunately,
the BMS premium $\lambda_{\kappa}^{[1]}\lambda_{\kappa}^{[2]} \tilde{r}^{\rm [dep]}(\ell)$ proposed by \citet{PengAhn} cannot serve this purpose, since $\tilde{r}^{\rm [dep]}(\ell)$ is determined through $\mathcal{F}_{T}^{\rm [freq]}$ and not $\mathcal{F}_{T}^{\rm [full]}$, as explained in section \ref{sec.freq-sev.BMS}. We will take up the task of finding the new optimal relativity in section \ref{sec.44}. For now, we point out that the new BMS premium (\ref{new.BMS.1}) should reflect the full history $\mathcal{F}_{T}^{\rm [full]}$, as opposed to $\mathcal{F}_{T}^{\rm [freq]}$, and, accordingly, the past severity records should be incorporated in the transition rule in such a BMS structure regardless of whether frequency and severity are independent. The inclusion of severity is important in terms of the efficiency of the prediction as the following formal result capitalizes.

\begin{theorem}\label{thm.1}
 For the frequency--severity model in section \ref{sec.freq-sev.BMS}, we obtain the following inequality of MSEs:
 \[
 \begin{aligned}
 &\E{\left(S_{T+1}-\E{S_{T+1}\big\vert \Lambda^{[1]}, \Lambda^{[2]}, \mathcal{F}_{T}^{\rm [full]}}
 \right)^2
 }\\
 &\quad\quad\quad\quad\quad\quad\le\E{\left(S_{T+1}-\E{S_{T+1}\big\vert \Lambda^{[1]}, \Lambda^{[2]}, \mathcal{F}_{T}^{\rm [freq]}}
 \right)^2}.
 \end{aligned}
 \]
\end{theorem}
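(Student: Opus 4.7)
The plan is to recognize Theorem 1 as a direct consequence of the $L^2$-projection property of conditional expectation, together with the obvious information-ordering $\mathcal{F}_T^{\rm [freq]} \subseteq \mathcal{F}_T^{\rm [full]}$. Nothing model-specific to the bivariate random effect setup is actually needed; the inequality is a monotonicity statement for conditional MSE.

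First I would set up notation by writing $\mathcal{G}_{\rm freq} := \sigma\left(\Lambda^{[1]}, \Lambda^{[2]}, \mathcal{F}_T^{\rm [freq]}\right)$ and $\mathcal{G}_{\rm full} := \sigma\left(\Lambda^{[1]}, \Lambda^{[2]}, \mathcal{F}_T^{\rm [full]}\right)$. Since $\mathcal{F}_T^{\rm [freq]} = \{n_t : t=1,\ldots,T\}$ is a deterministic function of the records $\mathcal{F}_T^{\rm [full]} = \{(n_t, \boldsymbol{y}_t) : t=1,\ldots,T\}$, the inclusion $\mathcal{G}_{\rm freq} \subseteq \mathcal{G}_{\rm full}$ is immediate. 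I would also note $S_{T+1} \in L^2$, which holds under the Poisson/EDF setup of the paper so that the conditional expectations in the statement are well-defined projections in $L^2$.

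Next I would invoke the defining property of conditional expectation: for any sub-$\sigma$-algebra $\mathcal{G}$,
\[
\E{S_{T+1}\,\big\vert\,\mathcal{G}} \;=\; \argmin_{Z \in L^2(\mathcal{G})} \E{(S_{T+1} - Z)^2}.
\]
The key observation is then that $\E{S_{T+1}\,\big\vert\,\mathcal{G}_{\rm freq}}$, being $\mathcal{G}_{\rm freq}$-measurable, is a fortiori $\mathcal{G}_{\rm full}$-measurable, hence a feasible candidate in the minimization defining $\E{S_{T+1}\,\big\vert\,\mathcal{G}_{\rm full}}$. Because the latter attains the infimum over the larger class $L^2(\mathcal{G}_{\rm full})$, we get
\[
\E{(S_{T+1} - \E{S_{T+1}\,\big\vert\,\mathcal{G}_{\rm full}})^2} \;\le\; \E{(S_{T+1} - \E{S_{T+1}\,\big\vert\,\mathcal{G}_{\rm freq}})^2},
\]
which is precisely the claim. (Equivalently, one may argue via the conditional variance decomposition $\E{\mathrm{Var}(S_{T+1}\mid\mathcal{G}_{\rm freq})} = \E{\mathrm{Var}(S_{T+1}\mid\mathcal{G}_{\rm full})} + \E{\mathrm{Var}(\E{S_{T+1}\mid\mathcal{G}_{\rm full}}\mid\mathcal{G}_{\rm freq})} \ge \E{\mathrm{Var}(S_{T+1}\mid\mathcal{G}_{\rm full})}$, using the tower property.)

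There is no real obstacle here; the only mild subtlety is to make sure that $\Lambda^{[1]}$ and $\Lambda^{[2]}$ are treated symmetrically on both sides of the inequality, so that the inclusion reduces to the transparent one between $\mathcal{F}_T^{\rm [freq]}$ and $\mathcal{F}_T^{\rm [full]}$. Conceptually, the theorem just formalizes the intuition that adding severity records to the conditioning set can never hurt predictive accuracy, which is exactly the motivation for the BMS transition rule proposed in section \ref{sec.44}.
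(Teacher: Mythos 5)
Your proposal is correct and follows essentially the same route as the paper: the authors likewise characterize $\E{S_{T+1}\mid \lambda_\kappa^{[1]},\lambda_\kappa^{[2]},\mathcal{F}_T}$ as the $\argmin$ of the conditional squared error over all functions of the conditioning information, and then use the tower property together with the fact that the frequency-only predictor is a feasible (measurable) candidate in the larger minimization over functions of $\mathcal{F}_T^{\rm [full]}$. If anything, your writeup is cleaner and more complete than the paper's own, which states the projection characterization and the tower-property step but leaves the concluding comparison implicit.
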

\noindent \textbf{Proof:} See Appendix \ref{app.a}. $\qquad \Box$\\

In plain terms, Theorem \ref{thm.1} states that, for the frequency--severity model, adding the past claim history of severity on top of the frequency history always increases the prediction power by providing a smaller MSE. This is true even when the frequency and severity are assumed to be independent, as shown in \eqref{eq.27}. Relatedly, we find conditions where adding the severity history does not improve the prediction performance and whose solution is presented below. The proof is straightforward, and is omitted.
\begin{corollary}\label{corrr.1} If $\P{\Theta^{[2]}=1}=1$, the equality in Theorem \ref{thm.1} holds. That is,
 \begin{equation}\label{eq.a.2}
\E{S_{T+1}\big\vert \lambda_\kappa^{[1]}, \lambda_\kappa^{[2]}, \mathcal{F}_{T}^{\rm [full]}}
=
\E{S_{T+1}\big\vert \lambda_\kappa^{[1]}, \lambda_\kappa^{[2]}, \mathcal{F}_{T}^{\rm [freq]}}.
\end{equation}
\end{corollary}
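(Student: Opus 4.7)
The plan is to reduce both Bayesian premiums in \eqref{eq.a.2} to posterior expectations of $\Theta^{[1]}$ alone, and then show via a conditional independence argument that the two posteriors coincide once $\Theta^{[2]}$ is degenerate at $1$. The intuition is that, since the individual severity distribution in \eqref{eq.y} depends on the residual effects only through $\Theta^{[2]}$, freezing $\Theta^{[2]}$ removes any informational content that the severity records can carry about $\Theta^{[1]}$.

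First I would repeat the tower-property calculation used in \eqref{kim.1}, replacing $\mathcal{F}_T^{\rm [freq]}$ by $\mathcal{F}_T^{\rm [full]}$ throughout, to obtain
\[
\E{S_{T+1}\big\vert \lambda_\kappa^{[1]}, \lambda_\kappa^{[2]}, \mathcal{F}_T^{\rm [full]}}
= \lambda_\kappa^{[1]}\lambda_\kappa^{[2]}\,\E{\Theta^{[1]}\Theta^{[2]}\big\vert \lambda_\kappa^{[1]}, \lambda_\kappa^{[2]}, \mathcal{F}_T^{\rm [full]}},
\]
and the analogous identity for $\mathcal{F}_T^{\rm [freq]}$. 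Under the hypothesis $\P{\Theta^{[2]}=1}=1$, the product $\Theta^{[1]}\Theta^{[2]}$ equals $\Theta^{[1]}$ almost surely inside both conditional expectations, so \eqref{eq.a.2} reduces to the single identity
\[
\E{\Theta^{[1]}\big\vert \lambda_\kappa^{[1]}, \lambda_\kappa^{[2]}, \mathcal{F}_T^{\rm [full]}}
= \E{\Theta^{[1]}\big\vert \lambda_\kappa^{[1]}, \lambda_\kappa^{[2]}, \mathcal{F}_T^{\rm [freq]}}.
\]

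Second, I would establish this identity via Bayes' formula. The joint likelihood of $\mathcal{F}_T^{\rm [full]}$ given $(\lambda_\kappa^{[1]},\lambda_\kappa^{[2]},\Theta^{[1]}=\theta^{[1]},\Theta^{[2]}=1)$ factorizes into a frequency piece $\prod_{t=1}^T f_{N}(n_t\mid \lambda_\kappa^{[1]}\theta^{[1]},\psi^{[1]})$ and a severity piece $\prod_{t=1}^T\prod_{j=1}^{n_t} p_Y(y_{t,j}\mid \lambda_\kappa^{[2]},1,\psi^{[2]})$. The severity piece is functionally independent of $\theta^{[1]}$, so in the posterior density for $\Theta^{[1]}$ it is absorbed into the normalizing constant; the resulting posterior thus depends on $\mathcal{F}_T^{\rm [full]}$ only through $\mathcal{F}_T^{\rm [freq]}$. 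Integrating against this posterior gives the required equality of posterior means, and combining with the first step proves \eqref{eq.a.2}.

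The only point to be careful about, and indeed the one technical step in the argument, is the conditional independence $\boldsymbol{Y}_t\perp \Theta^{[1]}\mid(\Lambda^{[1]},\Lambda^{[2]},\Theta^{[2]},N_t)$ that allows the severity likelihood to be treated as a constant in the $\theta^{[1]}$-posterior. This is immediate from the specification in section~\ref{sec.freq-sev.BMS}, where the individual severities are modeled as in \eqref{eq.y} with dependence on the residual effects only through $\Theta^{[2]}$. Because this independence is built into the frequency--severity model, no explicit integration over a joint density of $(\Theta^{[1]},\Theta^{[2]})$ is required, which is why the proof is essentially a one-line Bayesian computation and is omitted in the main text.
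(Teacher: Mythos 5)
Your argument is correct: the paper itself omits the proof as ``straightforward,'' and your two steps---reducing both premiums to $\lambda_\kappa^{[1]}\lambda_\kappa^{[2]}\E{\Theta^{[1]}\big\vert\cdot}$ via the tower property and $\Theta^{[2]}=1$ a.s., then noting that the severity likelihood is free of $\theta^{[1]}$ once $\Theta^{[2]}$ is degenerate, so it cancels from the $\Theta^{[1]}$-posterior---are exactly the intended routine computation. The one point you flag, the conditional independence of $\boldsymbol{Y}_t$ from $\Theta^{[1]}$ given $(\Lambda^{[2]},\Theta^{[2]},N_t)$, is indeed built into the model specification of section~\ref{sec.freq-sev.BMS}, so nothing is missing.
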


The condition $\P{\Theta^{[2]}=1}=1$ is equivalent to assuming
only dependence among frequencies is present, removing all other dependencies. Hence, in light of the analogy between the Bayesian premium and BMS premium, the main implication of Theorem \ref{thm.1} is that the inclusion the past claim history of severity in the BMS transition rule can result in a more efficient estimation of the premium.\\

In the next section, we show how to modify the classical BMS by incorporating the transition rule for severity, and, thus, derive the related optimal relativity.

\section{Transition rule driven by frequency--severity history }\label{sec.44}
The optimal relativity proposed by \citet{PengAhn}, as in \eqref{eq.160}, is ``optimal'' in the sense that the BMS transition rule is driven solely by the frequency. It also accommodates the severity component, but only through the dependence structure between frequency and severity and not from the past severity records. To achieve a full optimality, using the historical severity information as well as the frequency in the transition rule is essential, as discussed in section \ref{sec.4}. \\

Considering the severity history in a posteriori risk classification can be traced to \citet{Picard1976}. Since then, various posteriori risk classification methods incorporating both frequency and severity history have been proposed under various collective risk models \citep{Pinquet, Frangos2001, hernandez2009net, Lu2016, Gomez}.
For example, \citet{Picard1976} and \citet{Gomez} consider a posteriori risk classification for frequency by distinguishing small and large claims with some threshold. On the other hand, \citet{Frangos2001}, \citet{hernandez2009net}, and \citet{Lu2016} derive the Bayesian premium for aggregate severity. However, these studies mainly concern distributional quantities, and, thus, are not directly translated to the BMS with proper level structure and the associated transition rules.\\

Though the inclusion of the severity history is important in a posterior risk classification, this element has been largely overlooked in the literature for several reasons. First, historically, the main focus of the BMS in the literature has been the prediction of frequency following the industry tradition. In fact, a typical commercial BMS involves only the claim frequency in most jurisdictions. Second, the frequency-only BMS is simpler to model and analyze. Third, it is implicitly assumed that the severity is independent of the frequency. Hence, the separate analysis of severity has been considered unimportant or redundant. An exception is the recent work of \citet{PengAhn}, where the authors propose a new BMS to predict the aggregate severity under some dependence between frequency and severity. However, their BMS still uses the frequency history only while ignoring the severity history. To this extent, this section proposes a more general BMS and the associated transition rule that distinguishes the degree of claim severity, on top of the classical frequency-based BMS transition rule. We thus derive the optimal relativities.\\

\subsection{New transition rule: $-1/+h_1/+h_2$ system}
To distinguishes the degree of claim severity, we classify each claim into two types based on the claim size following \citet{Picard1976} and \citet{Gomez}. The severity with $y\leq \varphi$ is called a Type I claim and the severity with $y > \varphi$ is called a Type II claim for some predetermined threshold $\varphi$. For a given policyholder, define $L_{t-1}\in \{0, \cdots, z\}$ to be the (random) BM level occupied during time $(t-1, t)$ and assume that the new policyholder starts at level $l_0$ at time 0. A claim-free year in this BMS is rewarded by coming down one BM level; each Type I claim will be penalized by climbing up by $h_1$ levels per claim and each type II claim by $h_2$ levels per claim. We denote this BMS system as the $-1/+h_1/+h_2$ system with threshold $\varphi$. Naturally, we assume $h_2\ge h_1$; if $h=h_1=h_2$, this reduces to the classical $-1/+h$ BMS system \citep{Denuit2}. We denote the number of Type I claims during time $(t-1, t)$ as $N_{t-1}^{\{\rm I\}}$ and the number of Type II claims as $N_{t-1}^{\{\rm II\}}$.\\

Now, consider a driver in Level $L_{t-1} \in \{0,\dots, z\} $ during time $(t-1, t)$. The new level in the next time period $L_t$ can be determined based on the claim history $( N_{t-1}^{\{\rm I\}}, N_{t-1}^{\{\rm II\}} )$ via
\begin{equation}\label{ahn.def.1}
L_t=
\begin{cases}
 \max\left\{L_{t-1}-1, 0 \right\}, &N_{t-1}^{\{\rm I\}}=N_{t-1}^{\{\rm II\}}=0;\\
 \min\left\{L_{t-1}+h_1N_{t-1}^{\{\rm I\}}+h_2N_{t-1}^{\{\rm II\}}, z \right\}, &\hbox{otherwise}.\\
\end{cases}
\end{equation}
Because of the recursive nature, $L_t$ is a function of
$$N_0^{\{I\}}, \cdots, N_{t-1}^{\{I\}}\quad\hbox{and}\quad
N_0^{\{II\}}, \cdots, N_{t-1}^{\{II\}}
$$
for a given $L_0=\ell_0$. Also, as $L_t$ does not depend on $L_0, \cdots, L_{t-2}$, provided that $L_{t-1}$ is given, the process $L_t$ for $t=0, 1, \cdots,$ is Markovian. The transition probabilities of this Markov chain from $L_{t-1}=\ell$ to $L_t=\ell^*$,
given the observed and residual effect characteristics $\lambda_\kappa^{[1]},\lambda_\kappa^{[2]},\theta^{[1]},\theta^{[2]}$,
is defined as
\[
p_{\ell, \ell^*}\left(\lambda_\kappa^{[1]},\lambda_\kappa^{[2]},\theta^{[1]},\theta^{[2]}\right)=\P{L_{t}=\ell^*\big\vert \lambda_\kappa^{[1]},\lambda_\kappa^{[2]},\theta^{[1]},\theta^{[2]}, L_{t-1}=\ell},
\]
and can be calculated as follows.

\begin{theorem}\label{ahn.thm.2}
Consider the frequency--severity model in section \ref{sec.freq-sev.BMS} and a policyholder with
\[
\left(\Lambda^{[1]},\Lambda^{[2]},\Theta^{[1]},\Theta^{[2]} \right)=\left(\lambda_\kappa^{[1]},\lambda_\kappa^{[2]},\theta^{[1]},\theta^{[2]} \right)
\]
under the $-1/+h_1/+h_2$ BMS system with threshold $\varphi$.
Then, for $\ell, \ell^*\in\left\{0,1, \cdots, z\right\}$ and $\ell^*\neq z$, the transition probabilities of moving from level $\ell$ to $\ell^*$ are
\begin{equation}\label{eq.b5}
\begin{aligned}
 &{p_{\ell, \ell^*}\left(\lambda_\kappa^{[1]},\lambda_\kappa^{[2]},\theta^{[1]},\theta^{[2]} \right)}\\
 &=
 \begin{cases}
 \sum\limits_{(k_1, k_2)\in\mathcal{J}(\ell, \ell^*)}\left[
 q_1(k_1+k_2)
 {k_1+k_2 \choose k_1} \left(q_2(\varphi)\right)^{k_2}
 \left(1-q_2(\varphi)\right)^{k_1}\right], &\ell<\ell^*\quad\hbox{or}\quad\ell^*=\ell=z;\\
 q_1(0), &\max\{ \ell-1, 0 \}=\ell^*;\\
 0, & \hbox{otherwise},
 \end{cases}
\end{aligned}
\end{equation}
where $k_1$ and $k_2$ are the number of Type I and II claims, respectively, and
\begin{equation}\label{eq.e1}
\mathcal{J}(\ell, \ell^*)=\left\{ (k_1, k_2)\in\mathbb{N}_0\times\mathbb{N}_0 \big\vert \ell^*=\max\{\ell+k_1h_1+k_2h_2, z\}\right\}
\end{equation}
with $q_1(k)$ and $q_2(\varphi)$ being probabilities defined as
\begin{equation}\label{e1}
 q_1(k)=\P{N_t=k\Big\vert \lambda_\kappa^{[1]},\theta^{[1]}}
\end{equation}
and
\begin{equation}\label{e2}
q_2(\varphi)=\P{Y_{t,j}>\varphi\Big\vert \lambda_\kappa^{[2]},\theta^{[2]} }.
\end{equation}
\end{theorem}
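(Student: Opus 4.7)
The plan is to prove Theorem~\ref{ahn.thm.2} by direct conditioning. Conditional on the random effects $(\lambda_\kappa^{[1]},\lambda_\kappa^{[2]},\theta^{[1]},\theta^{[2]})$, the next-period level $L_t$ is a deterministic function of $L_{t-1}$ and of the pair $(N_t^{\{\mathrm I\}},N_t^{\{\mathrm{II}\}})$ generated during $(t-1,t)$, so the transition probability decomposes as a sum, over the admissible claim profiles $(k_1,k_2)\in\mathbb{N}_0^2$, of the joint probability that exactly $k_1$ Type~I and $k_2$ Type~II claims occur. The work therefore reduces to (i) computing this joint law in closed form and (ii) identifying, for each target level $\ell^\ast$, the set of $(k_1,k_2)$ that realize the transition $\ell\mapsto\ell^\ast$ under the rule \eqref{ahn.def.1}.

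For step (i), I would exploit the conditional independence structure of the frequency--severity model of Section~\ref{sec.freq-sev.BMS}. Given $(\lambda_\kappa^{[1]},\lambda_\kappa^{[2]},\theta^{[1]},\theta^{[2]})$, the count $N_t$ is independent of the i.i.d.\ individual severities $\{Y_{t,j}\}$, and each severity independently exceeds the threshold $\varphi$ with probability $q_2(\varphi)$ from \eqref{e2}. Hence, given $N_t=k_1+k_2$, the number of Type~II claims is $\mathrm{Binomial}(k_1+k_2,q_2(\varphi))$, and combining this with \eqref{e1} via the law of total probability yields
\[
\P{N_t^{\{\mathrm I\}}=k_1,\,N_t^{\{\mathrm{II}\}}=k_2\big\vert \lambda_\kappa^{[1]},\lambda_\kappa^{[2]},\theta^{[1]},\theta^{[2]}}=q_1(k_1+k_2)\binom{k_1+k_2}{k_2}\bigl(q_2(\varphi)\bigr)^{k_2}\bigl(1-q_2(\varphi)\bigr)^{k_1}.
\]

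For step (ii), I would split on whether any claim occurs. The no-claim profile $(0,0)$ produces $L_t=\max\{\ell-1,0\}$ with probability $q_1(0)$, which explains the second branch of \eqref{eq.b5}. Any profile with $k_1+k_2\ge 1$ produces $L_t=\min\{\ell+k_1h_1+k_2h_2,z\}$, and the set of such profiles that hit a given $\ell^\ast$ is precisely $\mathcal{J}(\ell,\ell^\ast)$ from \eqref{eq.e1} (understanding the cap at the top level $z$). Summing the joint probability from step~(i) over this index set gives the first branch of \eqref{eq.b5}. All remaining combinations of $(\ell,\ell^\ast)$ correspond to transitions unreachable by the rule \eqref{ahn.def.1} in a single step and hence have probability zero.

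The main obstacle is purely combinatorial: a careful verification that the partition of possibilities by $(k_1,k_2)$ is exhaustive and mutually exclusive, and that the two absorbing boundaries are handled consistently. The floor at $0$ is absorbed into the no-claim branch via the $\max$ in \eqref{ahn.def.1}, but the ceiling at $z$ requires more care, because the index set $\mathcal{J}(\ell,z)$ is generally infinite and one must confirm that the corresponding tail sum converges to the claimed formula (it clearly does, being bounded above by $1-q_1(0)$). No analytical difficulty is anticipated beyond this bookkeeping.
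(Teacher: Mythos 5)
Your proposal is correct and follows essentially the same route as the paper's proof: decompose the transition probability over claim profiles $(k_1,k_2)$ using the Markov/conditional-independence structure, compute the joint law of $(N_t^{\{\mathrm I\}},N_t^{\{\mathrm{II}\}})$ by binomial thinning of $N_t$ with exceedance probability $q_2(\varphi)$, and then match profiles to target levels via the index set (the paper does this with an indicator function $T_{\ell,\ell^*}$, which is equivalent to your $\mathcal{J}(\ell,\ell^*)$). No substantive difference.
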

\noindent \textbf{Proof:} See Appendix \ref{app.a}. $\qquad \Box$\\

While the above result is precise, the implementation of the transition probability is cumbersome, mainly because of the complicated definition of the index function in \eqref{eq.e1}. An alternative expression of this quantity, which can be easily implemented in computer, is provided in the following lemma.

\begin{lemma}
The transition probability in \eqref{eq.b5} in Theorem \ref{ahn.thm.2} can be written as
\begin{equation}\label{ahn.eq.7}
\begin{aligned}
 &p_{\ell, \ell^*}\left(\lambda_\kappa^{[1]},\lambda_\kappa^{[2]},\theta^{[1]},\theta^{[2]} \right)\\
 = &\begin{cases}
 \sum\limits_{k_2 \in \mathcal{D}}  \left[
 q_1\left(k_2+\zeta(k_2)\right) \, {\zeta(k_2)+k_2 \choose \zeta(k_2)} \left(q_2(\varphi)\right)^{k_2}
 \left(1-q_2(\varphi) \right)^{\zeta(k_2)} \right], &\ell<\ell^* ;\\
 q_1\left(0\right), & \max{\{\ell -1, 0\}}=\ell^*;\\
 0, &\hbox{otherwise;}
 \end{cases}
\end{aligned}
\end{equation}
for $\ell^*\neq z$, where $q_1$ and $q_2$ are defined in \eqref{e1} and \eqref{e2}, and
\[
\zeta\left(k_2\right)=\frac{(\ell_2-\ell_1) - k_2 h_2 }{h_1}
\quad\hbox{and}\quad
\mathcal{D}=\left\{k_2\in\mathbb{N}_0\big\vert \frac{(\ell_2-\ell_1) - k_2 h_2}{h_1} \in \mathbb{N}_0\right\}.
\]
For $\ell^*=z$, we have
\[
 p_{\ell, z}\left(\lambda_\kappa^{[1]},\lambda_\kappa^{[2]},\theta^{[1]},\theta^{[2]}\right)=1-\sum\limits_{\ell^*=0}^{z-1} p_{\ell, \ell^*}\left(\lambda_\kappa^{[1]},\lambda_\kappa^{[2]},\theta^{[1]},\theta^{[2]}\right).
\]
\end{lemma}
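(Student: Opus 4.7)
The plan is essentially a change-of-variables argument applied to the sum from Theorem \ref{ahn.thm.2}. Starting from the expression
\[
p_{\ell, \ell^*}\left(\lambda_\kappa^{[1]},\lambda_\kappa^{[2]},\theta^{[1]},\theta^{[2]}\right)
= \sum_{(k_1, k_2)\in\mathcal{J}(\ell, \ell^*)} q_1(k_1+k_2) \binom{k_1+k_2}{k_1} \left(q_2(\varphi)\right)^{k_2} \left(1-q_2(\varphi)\right)^{k_1},
\]
the first step is to simplify the defining relation of $\mathcal{J}(\ell, \ell^*)$ for the case $\ell < \ell^* < z$. Since the level cap at $z$ is not active in this range, the condition reduces to $k_1 h_1 + k_2 h_2 = \ell^* - \ell$ (writing $\ell_1=\ell$ and $\ell_2=\ell^*$ for compatibility with the lemma's notation).

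The second step is to recognize that this single linear constraint lets us eliminate $k_1$ in favor of $k_2$. Solving, $k_1 = \bigl((\ell_2-\ell_1) - k_2 h_2\bigr)/h_1 = \zeta(k_2)$, and the requirement $(k_1,k_2)\in\mathbb{N}_0\times\mathbb{N}_0$ becomes exactly $k_2\in\mathcal{D}$, since $\mathcal{D}$ was defined as the set of $k_2\in\mathbb{N}_0$ for which $\zeta(k_2)\in\mathbb{N}_0$. Thus the map $k_2\mapsto(\zeta(k_2),k_2)$ is a bijection from $\mathcal{D}$ onto $\mathcal{J}(\ell,\ell^*)$. Substituting this reparametrization into the summand, using $k_1+k_2=\zeta(k_2)+k_2$ and $\binom{k_1+k_2}{k_1}=\binom{\zeta(k_2)+k_2}{\zeta(k_2)}$, produces the first branch of \eqref{ahn.eq.7}.

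The remaining branches require no work beyond what is already in Theorem \ref{ahn.thm.2}: the case $\max\{\ell-1,0\}=\ell^*$ gives $q_1(0)$ verbatim, and the ``otherwise'' case gives $0$. For the terminal case $\ell^*=z$, one simply invokes the fact that the rows of the transition matrix are probability distributions, giving $p_{\ell,z} = 1 - \sum_{\ell^*=0}^{z-1} p_{\ell,\ell^*}$; this sidesteps the need to enumerate all $(k_1,k_2)$ whose image is clipped to $z$ by the $\min$ operation.

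There is no real obstacle here; the lemma is essentially a computational restatement. The only subtle point is verifying that the bijection $k_2\leftrightarrow(k_1,k_2)$ is well-defined and exhaustive, which is immediate once $\mathcal{D}$ is seen to be precisely the set where $\zeta(k_2)$ produces an admissible nonnegative integer $k_1$. Care must also be taken to separate the $\ell^*=z$ case out before reparametrizing, since the capping in the definition of $\mathcal{J}$ destroys the one-to-one correspondence with $\mathcal{D}$ at the boundary; handling it via complementation is cleanest.
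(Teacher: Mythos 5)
Your proposal is correct and follows essentially the same route as the paper's own proof: for $\ell^*\neq z$ the capping in the definition of $\mathcal{J}(\ell,\ell^*)$ is inactive, so the constraint becomes the linear equation $k_1h_1+k_2h_2=\ell^*-\ell$, which puts $\mathcal{J}(\ell,\ell^*)$ in bijection with $\mathcal{D}$ via $k_1=\zeta(k_2)$, and the $\ell^*=z$ case is dispatched by row-normalization of the transition matrix. Your write-up merely spells out the reparametrization and the boundary caveat more explicitly than the paper does.
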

\noindent \textbf{Proof:} It is satisfactory to show \eqref{ahn.eq.7}. For $\ell^*\neq z$, we obtain
 \[
 \ell^*=\max\{\ell+k_1h_1+k_2h_2, z\}
 \]
 if and only if $\ell^*=\ell+k_1h_1+k_2h_2$.
 Hence, we obtain $(k_1, k_2)\in \mathcal{J}(\ell, \ell^*)$ if and only if $k_2\in \mathcal{D}$, which concludes \eqref{ahn.eq.7} from \eqref{eq.b5} for $\ell^*\neq z$.
 $\qquad \Box$\\

\subsection{Long Term Behavior}
From the transition probability ${p_{\ell, \ell^*}\left(\lambda_\kappa^{[1]},\lambda_\kappa^{[2]},\theta^{[1]},\theta^{[2]} \right)}$ in Theorem \ref{ahn.thm.2}, we can construct the one-step transition matrix for the policyholder
\begin{equation}\label{eq.oh.L2}
\boldsymbol{P}\left( \lambda_\kappa^{[1]},\lambda_\kappa^{[2]},\theta^{[1]},\theta^{[2]}\right)
\end{equation}
 of size $(z+1)\times(z+1)$.
The associated conditional stationary distribution is then given by
\begin{equation}\label{eq.oh.L}
\begin{aligned}
&\boldsymbol{\pi}\left(\lambda_\kappa^{[1]},\lambda_\kappa^{[2]}, \theta^{[1]}, \theta^{[2]}\right)\\
&\quad\quad\quad\quad=\left(
\pi_0\left(\lambda_\kappa^{[1]},\lambda_\kappa^{[2]}, \theta^{[1]}, \theta^{[2]}\right), \cdots,
\pi_{z}\left(\lambda_\kappa^{[1]},\lambda_\kappa^{[2]}, \theta^{[1]}, \theta^{[2]}\right)
 \right)^{\mathrm T}
\end{aligned}
\end{equation}
whose $(l+1)$th element is the stationary probability for the policyholder to remain in BM level $\ell$. An explicit expression for the stationary probability from the transition probability is presented in Lemma \ref{lem.2} in Appendix \ref{app.c}.\\

Finally, with this proposed transition rule, the BM level $L$ occupied by the randomly selected policyholder in the steady state, that is, the unconditional stationary distribution, is given as
\begin{equation} \label{eq.PL}
\begin{aligned}
\P{L=\ell}
&=\sum\limits_{\kappa \in\mathcal{K}} w_{\kappa} \iint \pi_\ell\left(\lambda_{\kappa}^{[1]}\theta^{[1]},\lambda_{\kappa}^{[2]}\theta^{[2]} \right) h(\theta^{[1]},\theta^{[2]}) {\rm d}\theta^{[1]}{\rm d}\theta^{[2]}, \quad \hbox{for\quad$\ell=0, \cdots, {z}$}.
\end{aligned}
\end{equation}

\subsection{Optimal Relativities under $-1/+h_1/+h_2$ System}\label{sec.new.relativity}
In this subsection, we develop the optimal relativities for the frequency--severity model in section \ref{sec.freq-sev.BMS} by incorporating the history of severity as well as frequency under the new $-1/+h_1/+h_2$ rule. For a policyholder in the $\kappa$th risk class at the outset of the contract, the BMS premium takes the following form
\[
\lambda_{\kappa}^{[1]}\lambda_{\kappa}^{[2]}{r}^{\rm [new]}(\ell), \qquad \ell=0,...,z,
\]
as previously mentioned in (\ref{new.BMS.1}).
The set of optimal relativities $\tilde{r}^{\rm [new]}(\ell)$ in this setting is the solution of the following optimization problem for the aggregate severity:
 \begin{equation}\label{eq.67}
 \begin{aligned}
&\left(\tilde{r}^{\rm [new]}(0), \cdots, \tilde{r}^{\rm [new]}(z)\right)
=\argmin\limits_{\left(r(0), \cdots, r(z)\right)\in \mathbb{R}^{{z+1}}}
\E{\left(\E{S_{T+1} \big\vert \Lambda^{[1]},\Lambda^{[2]},\Theta^{[1]},\Theta^{[2]}}-\Lambda^{[1]} \Lambda^{[2]} r(L) \right)^2}\\
&\quad\quad\quad\quad\quad\quad\quad\quad\quad=\argmin_{({r}(0), \cdots, {r}(z))\in \mathbb{R}^{{z+1}}}
\sum\limits_{\ell=0}^{z}\E{\left(
\Lambda^{[1]}\Lambda^{[2]}\Theta^{[1]}\Theta^{[2]} -\Lambda^{[1]}\Lambda^{[2]}{r}(L)\right)^2|L=\ell}\P{L=\ell}.\\
\end{aligned}
 \end{equation}
We note that the optimization problem in \eqref{eq.67} is identical to that in \eqref{eq.h.11}, except that the stationary distribution of $L$ in \eqref{eq.67} is now determined by both frequency and severity history as shown in the previous subsection. This is in contrast to \eqref{eq.160}, where the distribution of $L$ is determined by the parameters of frequency history only. Hence, the analytical solution of \eqref{eq.67} can be obtained in a similar manner as \eqref{eq.h.11}.

\begin{theorem}\label{thm.oh.1}
Consider the frequency--severity model in section \ref{sec.freq-sev.BMS} under $-1/+h_1/+h_2$ BMS system.
The optimal relativity $\tilde{r}^{\rm [new]}(\ell)$ that solves \eqref{eq.67} is given by
 \begin{equation}\label{eq.16}
\tilde{r}^{\rm [new]}(\ell)=\frac{\E{\left( \Lambda^{[1]}\Lambda^{[2]}\right)^2 \Theta^{[1]}\Theta^{[2]}\bigg\vert L=\ell}}{\E{\left(\Lambda^{[1]} \Lambda^{[2]}\right)^2\big\vert L=\ell}}
\quad\hbox{for}\quad \ell=0, \cdots,{z},
\end{equation}
where
\begin{equation}\label{eq.161}
\begin{aligned}
&\E{\left( \Lambda^{[1]}\Lambda^{[2]}\right)^2 \Theta^{[1]}\Theta^{[2]}\bigg\vert L=\ell}\\
&=\frac{\sum\limits_{\kappa\in\mathcal{K}} w_{\kappa} \left(\lambda_{\kappa}^{[1]}\lambda_{\kappa}^{[2]}\right)^2
\iint \theta^{[1]}\theta^{[2]} \pi_\ell\left(\lambda_{\kappa}^{[1]}\theta^{[1]}, \lambda_{\kappa}^{[2]}\theta^{[2]}\right) h(\theta^{[1]}, \theta^{[2]}){\rm d}\theta^{[1]}{\rm d} \theta^{[2]}}
{\sum\limits_{\kappa\in\mathcal{K}}w_{\kappa}
\iint \pi_\ell\left(\lambda_{\kappa}^{[1]}\theta^{[1]}, \lambda_{\kappa}^{[2]}\theta^{[2]}
\right) h(\theta^{[1]}, \theta^{[2]}){\rm d}\theta^{[1]}{\rm d} \theta^{[2]} }
\end{aligned}
\end{equation}
and
\begin{equation}\label{eq.162}
\E{\left( \Lambda^{[1]}\Lambda^{[2]}\right)^2\big\vert L=\ell} =
\frac{\sum\limits_{\kappa\in\mathcal{K}} w_{\kappa} \left(\lambda_{\kappa}^{[1]}\lambda_{\kappa}^{[2]}\right)^2 \iint \pi_\ell\left(\lambda_{\kappa}^{[1]}\theta^{[1]}, \lambda_{\kappa}^{[2]}\theta^{[2]}
\right)h(\theta^{[1]}, \theta^{[2]}){\rm d}\theta^{[1]}{\rm d} \theta^{[2]}}
{\sum\limits_{\kappa\in\mathcal{K}} w_{\kappa} \iint \pi_\ell\left(\lambda_{\kappa}^{[1]}\theta^{[1]}, \lambda_{\kappa}^{[2]}\theta^{[2]}
\right) h(\theta^{[1]}, \theta^{[2]}){\rm d}\theta^{[1]}{\rm d} \theta^{[2]}}.
\end{equation}
\end{theorem}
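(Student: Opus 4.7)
The plan is to mirror the derivation of the frequency-driven optimal relativity (\ref{eq.160}) of \citet{PengAhn} for the analogous problem (\ref{eq.h.11}), with the substantive change being that the stationary distribution of $L$ now depends on both frequency and severity residual effects through the $-1/+h_1/+h_2$ transition rule introduced in this section.

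The proof proceeds in two main steps. First, I would note that the objective in (\ref{eq.67}) decouples across BM levels since each summand depends only on $r(\ell)$, so one can minimize over each $r(\ell)\in\mathbb{R}$ separately. For each $\ell$, differentiating the conditional quadratic $\E{\left(\Lambda^{[1]}\Lambda^{[2]}\Theta^{[1]}\Theta^{[2]} - \Lambda^{[1]}\Lambda^{[2]}\, r(\ell)\right)^2 \big\vert L=\ell}$ in $r(\ell)$ and setting the derivative to zero produces the first-order condition
\[
\E{\left(\Lambda^{[1]}\Lambda^{[2]}\right)^2 \big\vert L=\ell}\, r(\ell) = \E{\left(\Lambda^{[1]}\Lambda^{[2]}\right)^2 \Theta^{[1]}\Theta^{[2]} \big\vert L=\ell},
\]
which rearranges to (\ref{eq.16}). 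The second derivative $2\E{(\Lambda^{[1]}\Lambda^{[2]})^2 \big\vert L=\ell}$ is positive, so this stationary point is the unique minimizer. This step is structurally identical to the derivation of (\ref{eq.160}).

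Second, to obtain the explicit expressions (\ref{eq.161}) and (\ref{eq.162}), I would apply the law of total expectation by conditioning on the risk class and the residual effects. The key input is that, conditional on $(\Lambda^{[1]},\Lambda^{[2]},\Theta^{[1]},\Theta^{[2]}) = (\lambda_\kappa^{[1]}, \lambda_\kappa^{[2]}, \theta^{[1]}, \theta^{[2]})$, the steady-state probability of occupying level $\ell$ equals $\pi_\ell(\lambda_\kappa^{[1]}\theta^{[1]}, \lambda_\kappa^{[2]}\theta^{[2]})$ from (\ref{eq.oh.L}), which is supplied by the Markov chain analysis in the preceding subsection. Plugging this conditional probability into the numerator and the denominator of (\ref{eq.16}), weighting by $w_\kappa\, h(\theta^{[1]}, \theta^{[2]})$ from the joint law of the a priori characteristics and the residual effects, and canceling the common $\P{L=\ell}$ factor (whose expression (\ref{eq.PL}) becomes the common denominator) yields (\ref{eq.161}) and (\ref{eq.162}).

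The main obstacle is purely administrative: compared with \citet{PengAhn}, the stationary distribution now depends on both $\lambda_\kappa^{[1]}\theta^{[1]}$ and $\lambda_\kappa^{[2]}\theta^{[2]}$, so all integrals become two-dimensional against the joint density $h(\theta^{[1]},\theta^{[2]})$ rather than the marginal $g_1(\theta^{[1]})$ alone. One must also keep careful track of the argument order of $\pi_\ell(\cdot,\cdot)$ so that the frequency and severity components of the residual effects are paired with the correct a priori rates. Once this bookkeeping is handled, no genuinely new analytic difficulty arises because the conditional Markov structure has already been established in the previous subsection.
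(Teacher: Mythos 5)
Your proposal is correct and follows essentially the same route as the paper's own proof: the paper likewise decomposes the objective over the levels $\ell$, obtains the first-order condition whose solution is \eqref{eq.16}, and then evaluates the numerator and denominator by conditioning on the risk class and applying Bayes' rule with the conditional stationary probability $\pi_\ell\left(\lambda_{\kappa}^{[1]}\theta^{[1]}, \lambda_{\kappa}^{[2]}\theta^{[2]}\right)$ and the joint density $h$, with $\P{L=\ell}$ appearing as the common denominator in \eqref{eq.161} and \eqref{eq.162}. Your additional check of the second-order condition is a harmless (and correct) extra.
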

\noindent \textbf{Proof:} See Appendix \ref{app.c}. $\qquad \Box$\\

While the optimal relativity \eqref{eq.16} has a similar expression as in \eqref{eq.160}, it differs in that the distribution of $L$ in \eqref{eq.16} is determined by both frequency and severity components, while the distribution of $L$ in \eqref{eq.160} is determined by the frequency component only. As a result, $\tilde{r}^{\rm [new]}(\ell)$ in \eqref{eq.16} is different from $\tilde{r}^{\rm [dep]}(\ell)$ in \eqref{eq.160}. We remark that the Bayesian version of the BM premium in this section can be represented as follows:
 \begin{equation}\label{kim.2}
\begin{aligned}
\E{S_{T+1}\big\vert \lambda_\kappa^{[1]}, \lambda_\kappa^{[2]}, \mathcal{F}_{T}^{\rm [full]} }
&=\lambda_\kappa^{[1]}\lambda_\kappa^{[2]}\E{\Theta^{[1]}\Theta^{[2]}\big\vert \lambda_\kappa^{[1]}, \lambda_\kappa^{[2]}, \mathcal{F}_{T}^{\rm [full]} }.
\end{aligned}
\end{equation}
Hence, BM relativity $\tilde{r}^{\rm [new]}(\ell)$ can be interpreted as the BMS version of the posterior estimation
\[
\E{\Theta^{[1]}\Theta^{[2]}\big\vert \lambda_\kappa^{[1]}, \lambda_\kappa^{[2]}, \mathcal{F}_{T}^{\rm [full]} }
\]
in \eqref{kim.2}.

\section{Comparison of the BMS}
For the comparison of the quality of the various BMS systems,
\citet{PengAhn} propose the {hypothetical mean square error} (HMSE), a measure defined as the mean square error (MSE) between the aggregate severity and the BMS premium in the stationary state. For example, for the frequency--severity model in section \ref{sec.freq-sev.BMS}, we obtain
\begin{align*}
\label{}
{\rm HMSE}&(\boldsymbol{r}, -1/+h)=\E{\left(\E{S_{T+1} \big\vert \Lambda^{[1]},\Lambda^{[2]},\Theta^{[1]},\Theta^{[2]}}-\Lambda^{[1]} \Lambda^{[2]} r(L) \right)^2}
 \\
 & =\sum\limits_{\kappa\in\mathcal{K}} w_\kappa
\int\int \sum\limits_{l=1}^{z} \left(\lambda_\kappa^{[1]}\lambda_\kappa^{[2]} \theta^{[1]}\theta^{[2]}-
\lambda_\kappa^{[1]}\lambda_\kappa^{[2]}r_l
\right)^2
\pi_\ell\left(\lambda_\kappa^{[1]}\theta^{[1]}\right) h(\theta^{[1]}, \theta^{[2]}) {\rm d}\theta^{[1]}{\rm d}\theta^{[2]}
\end{align*}
for a given relativity set $\boldsymbol{r}=\left(r(0), \cdots, r(z) \right)$. Here $L$ is governed by the transition rule $-1/+h$, and its distribution is the stationary distribution of the BM level of a randomly chosen policyholder. If we use the proposed $-1/+h_1/+h_2$ system (with threshold $\varphi$), the HMSE is defined as
 \begin{equation*}
 \begin{aligned}
 {\rm HMSE}&(\boldsymbol{r}, -1/+h_1/+h_2)\\
&=\E{\left(\E{S_{T+1} \big\vert \Lambda^{[1]},\Lambda^{[2]},\Theta^{[1]},\Theta^{[2]}}-\Lambda^{[1]} \Lambda^{[2]} r_L \right)^2}\\
&=\sum\limits_{\kappa\in\mathcal{K}} w_\kappa
\int\int \sum\limits_{l=1}^{z} \left(\lambda_\kappa^{[1]}\lambda_\kappa^{[2]} \theta^{[1]}\theta^{[2]}-
\lambda_\kappa^{[1]}\lambda_\kappa^{[2]}r_l
\right)^2
\pi_\ell\left(\lambda_\kappa^{[1]}\theta^{[1]}, \lambda_\kappa^{[2]}\theta^{[2]}\right) h(\theta^{[1]}, \theta^{[2]}) {\rm d}\theta^{[1]}{\rm d}\theta^{[2]}.
\end{aligned}
 \end{equation*}
Again, $L$ is governed by the transition rule $-1/+h_1/+h_2$ system with threshold $\varphi$, and its distribution is the stationary distribution of the BM level of a randomly chosen policyholder in this system.
The following corollary justifies the use of the size of severity in the transition rule.

\begin{corollary}\label{HMSE.comp.1}
The premium in BMS with transition rule by the past claim history of both frequency and severity can provide a more efficient premium in the BMS with transition rule by the past claim history of frequency only in the following sense:
\[
 \min\limits_{\boldsymbol{r}\in \mathbb{R}^{{z+1}}, h_1, h_2\in\mathbb{N}_0, \varphi\in\mathbb{R}}{\rm HMSE}(\boldsymbol{r}, -1/+h_1/+h_2)
 \le
 \min\limits_{\boldsymbol{r}\in \mathbb{R}^{{z+1}}, h\in\mathbb{N}_0}
 {\rm HMSE}(\boldsymbol{r}, -1/+h).
\]
\end{corollary}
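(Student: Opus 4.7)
The plan is to exhibit the $-1/+h$ system as a special case of the $-1/+h_1/+h_2$ system, so that the feasible set of the left-hand side minimization strictly contains (a copy of) the feasible set of the right-hand side minimization. Concretely, I would fix an arbitrary $h \in \mathbb{N}_0$, an arbitrary $\boldsymbol{r} \in \mathbb{R}^{z+1}$, and an arbitrary $\varphi \in \mathbb{R}$, then set $h_1 = h_2 = h$ and show that under this choice the two HMSE expressions coincide exactly. Taking the infimum over the right-hand side feasible set and comparing to the infimum over the larger left-hand side feasible set then yields the desired inequality.

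The key algebraic step is verifying that the transition mechanism \eqref{ahn.def.1} with $h_1 = h_2 = h$ collapses to the classical $-1/+h$ transition, independently of $\varphi$. Indeed, since $N_{t-1}^{\{\rm I\}} + N_{t-1}^{\{\rm II\}} = N_{t-1}$ by definition of the Type I/II split, the condition $N_{t-1}^{\{\rm I\}} = N_{t-1}^{\{\rm II\}} = 0$ is equivalent to $N_{t-1}=0$, and the jump size
\[
h_1 N_{t-1}^{\{\rm I\}} + h_2 N_{t-1}^{\{\rm II\}} = h\bigl(N_{t-1}^{\{\rm I\}} + N_{t-1}^{\{\rm II\}}\bigr) = h\,N_{t-1}
\]
no longer depends on the severity partition, so $\varphi$ drops out. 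Consequently the one-step transition matrix in \eqref{eq.oh.L2} reduces to the frequency-only transition matrix whose entries are $p_{\ell,\ell^*}(\lambda_\kappa^{[1]}\theta^{[1]})$ from \eqref{eq.c1}, and its stationary distribution \eqref{eq.oh.L} collapses to $\pi_\ell(\lambda_\kappa^{[1]}\theta^{[1]})$ in \eqref{eq.h.2}. Substituting this into the definition of ${\rm HMSE}(\boldsymbol{r}, -1/+h_1/+h_2)$ gives term-by-term agreement with ${\rm HMSE}(\boldsymbol{r}, -1/+h)$.

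Once this embedding is established, the inequality is immediate: for every feasible $(\boldsymbol{r}, h)$ on the right,
\[
{\rm HMSE}(\boldsymbol{r}, -1/+h) = {\rm HMSE}(\boldsymbol{r}, -1/+h/+h) \ge \min_{\boldsymbol{r}', h_1, h_2, \varphi} {\rm HMSE}(\boldsymbol{r}', -1/+h_1/+h_2),
\]
and taking the infimum on the left over $(\boldsymbol{r}, h)$ yields the stated bound.

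I do not expect a serious obstacle here; the result is essentially a ``larger search space gives a smaller minimum'' argument, and the only substantive point is the stationary-distribution identification in the degenerate case $h_1 = h_2$, which follows directly from \eqref{ahn.def.1} and the explicit transition probabilities in Theorem \ref{ahn.thm.2}. If any care is needed, it is only to note that the choice of $\varphi$ is truly immaterial when $h_1 = h_2$, so the infimum on the left is well-defined and achieved (or approached) along the diagonal $\{h_1 = h_2\}$ at the same value as on the right.
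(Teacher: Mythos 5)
Your proposal is correct and follows essentially the same route as the paper's own (one-line) proof: both rest on the observation that setting $h_1=h_2=h$ makes the $-1/+h_1/+h_2$ system coincide with the $-1/+h$ system, so the left-hand minimization is over a strictly larger feasible set. Your write-up merely fills in the details the paper calls ``trivial'' (the collapse of the transition matrix and stationary distribution, and the irrelevance of $\varphi$ on the diagonal), which is a sound elaboration rather than a different argument.
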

\noindent \textbf{Proof:}
 The proof is trivial from the observation that the transition rule $-1/+h_1/+h_2$ is the same as the transition rule $-1/+h$ if $h_1=h_2=h$. Consequently, we obtain
 \[
 {\rm HMSE}(\boldsymbol{r}, -1/+h/+h)={\rm HMSE}(\boldsymbol{r}, -1/+h)
 \]
 which concludes the proof.
$\qquad \Box$\\

The remaining section provides a numerical study to show the performance of the proposed BMS compared with the BMS in the recent work of \citet{PengAhn} in section \ref{sec.freq-sev.BMS}. Specifically, we consider the frequency--severity model with the following specification: We assume 10 different BM levels (i.e., $z=9$), but for simplicity, consider a single a priori risk class ($\mathcal{K}=1$).\\

We model the frequency part as
$$N_{t}|(\theta^{[1]},\lambda_\kappa^{[1]})\sim {\rm Poisson}(\lambda_\kappa^{[1]}\theta^{[1]}), $$
and the individual severity part as
 \begin{equation*}
 Y_{t,j}|(\theta^{[2]}, \lambda_\kappa^{[2]}) \sim {\rm Gamma}(\lambda_\kappa^{[2]}\theta^{[2]},1/\psi^{[2]}),
 \end{equation*}
 where $\lambda_\kappa^{[2]}\theta^{[2]}$ is the mean and $1/\psi^{[2]}$ is the shape parameter.
 The bivariate random effect $\left(\Theta^{[1]}, \Theta^{[2]}\right)$ follows the joint distribution
 $H=C(G_1, G_2),$
where $C$ is the Gaussian copula with correlation coefficient $\rho$, and the marginal distributions, $G_1$ and $G_2$, are given by

\[
\begin{cases}
\Theta^{[1]} &\sim {\rm Log Normal}(-\left(\sigma^{[1]}\right)^2/2, \left(\sigma^{[1]}\right)^2);\\
\Theta^{[2]} &\sim {\rm Log Normal}(-\left(\sigma^{[2]}\right)^2/2, \left(\sigma^{[2]}\right)^2),
\end{cases}
\]
and whose parameters are chosen to preserve the unity of the mean. In the following subsections, we investigate the impact of different factors using various selected parameter values.

\subsection{Impact of different dependence between frequency and severity}

We first study the impact of different degrees of dependence between frequency and severity on the HMSE. In particular, we consider three different copula dependence parameters: $\rho=-0.8, -0.4$, and $0.4$. Hence, the larger the value of $\rho$, the stronger the dependence becomes. We assess the impact of these changes on two BMSs: -1/+1 system of \citep{PengAhn} and our proposed -1/+1/+2 system. As our BMS depends on the severity threshold value $\varphi$, we consider four different thresholds, namely, the 75th, 90th, 99th, and 99.9th quantile of severity, for $\varphi$.
The remaining parameters are fixed at $\lambda^{[1]}=0.5$, $\lambda^{[2]}=\exp(8.8)$, $\left(\sigma^{[1]}\right)^2=0.99$, $\left(\sigma^{[2]}\right)^2=0.29$, and $\psi^{[2]}=1/0.67$, which are motivated from the real data analysis in section \ref{sec.6}. Under this specification, the threshold values are given by $\varphi=8200$, $16800$, $48100$, and $94300$. \\

The result is summarized in Table \ref{tab.ex2}, which shows the optimal relativity, stationary distribution, and HMSE for each $\rho=-0.8, -0.4$, and $0.4$. Several comments are in place from this table:
\begin{itemize}
 \item For a given $\rho$ value, as $\varphi$ grows larger, the optimal relativities, stationary distribution, and HMSE become identical for both the -1/+1 and -1/+1/+2 BMSs. This is from a general property that, by definition, the -1/+1/+2 system with infinitely large threshold is the same as the -1/+1 system.
 \item On the other hand, a smaller $\varphi$ means heavier penalty and more drivers end up in higher BM level. This is confirmed by the stationary probability at the highest level $\P{L=9}$ growing larger as $\varphi$ decreases for all $\rho$ values. For $\rho=-0.8$ and $-0.4$, the threshold $\varphi$ at the 90th quantile shows the best performance in terms of the HMSE between four thresholds, whereas, in $\rho=0.4$ case, the 99th quantile is the best threshold. In general, the HMSE of the -1/+1/+2 system varies by the threshold and is not always smaller than that of the -1/+1 system. However, the best HMSE of the -1/+1/+2 system beats that of the -1/+1 system with an appropriate choice of the threshold $\varphi$, following the argument of Corollary \ref{HMSE.comp.1}. The optimal threshold can be found with numerical iterations, though its general solution is not straightforward.

\item As the dependence between frequency and individual severities becomes stronger, the aggregate severity becomes larger and the premium increase will be more substantial. We observe this from comparing the optimal relativities across different $\rho$ values. In particular, as $\rho$ becomes larger, the optimal relativities become larger for most BM levels with the exception in the lower BM levels, especially at level 0, in both the -1/+1 and -1/+1/+2 systems.
\end{itemize}

\subsection{Impact of different mean frequency and transition rule}

To study the impact of different frequency parameter and the transition rule, we consider several combinations on the mean frequency, whose result is shown in Table \ref{tab.ex3}, while other parameters are fixed at $\lambda^{[2]}=\exp(8.8)$, $\psi^{[2]}=1/0.67$, $\left(\sigma^{[1]}\right)^2=0.99$, $\left(\sigma^{[2]}\right)^2=0.29$. For the dependence, we set $\rho=-0.45$, the choice made from an actual dataset in the next section.
\begin{itemize}
 \item In comparing (a) and (b) in Table \ref{tab.ex3}, we note that both keep the same frequency parameter $\lambda^{[1]}$, but Table \ref{tab.ex3}(b) adopts a more stringent transition rule by having the -1/+1/+3 system. Consequently, the stationary probabilities $\P{L=\ell}$ in Table \ref{tab.ex3}(b) tend to be larger in the upper BM levels and smaller in the lower BM levels compared with Table \ref{tab.ex3}(a), as they should.

 \item Furthermore, the best threshold among other alternatives in terms of the HMSE is $\varphi=16800$ in Table \ref{tab.ex3}(a), but $\varphi=48100$ in Table \ref{tab.ex3}(b). This implies that the optimal threshold depends on the transition rule as well as other parameters, as previously mentioned.

 \item Table \ref{tab.ex3}(a) and Table \ref{tab.ex3}(c) share the same transition rule, but have different mean frequency $\lambda^{[1]}$ values. By having a larger frequency mean, Table \ref{tab.ex3}(c) leads to higher proportion of drivers in higher BM levels compared with Table \ref{tab.ex3}(a) under both the -1/+1 and -1/+1/+2 systems.
 However, the relativities are generally smaller in Table \ref{tab.ex3}(c) under both the BMS rules because of the negative $\rho$ value. That is, in this specific setting, the average claim size becomes smaller as there are more claims, and this, in turn, leads to a smaller aggregate loss on average.
 Similar to the above point, we also see that the optimal threshold in terms of the HMSE changes as $\lambda^{[1]}$ does; for $\lambda^{[1]}=0.5$, the optimal threshold is $\varphi=16800$, corresponding to the 90th quantile, whereas $\varphi=94300$, corresponding to the 99.9th quantile, when $\lambda^{[1]}=2$.

\end{itemize}

\subsection{Impact of different dependence among severities}
Here, we study the impact of the dependence among individual severities by changing $\sigma^{[2]}$. We assume the independence between frequency and individual severities by setting $\rho=0$ for this study. The other parameters are fixed at $\lambda^{[1]}=0.5$,
$\lambda^{[2]}=\exp(8.8)$, $\psi^{[2]}=1/0.67$, and $\left(\sigma^{[1]}\right)^2=0.99$. 
\begin{itemize}
 \item Focusing on the -1/+1 system, the optimal relativities remain the same over different $\sigma^{[2]}$ values. This is because the -1/+1 system is driven by the frequency only and its relativities from the independence between frequency and severity; it responds to the mean $E( \Theta^{[2]})$, but this mean is fixed at 1, although $\sigma^{[2]}$ changes. This is easily observable from (\ref{eq.158}).
\item The independence between frequency and severity similarly affects the -1/+1/+2 system, where the stationary distribution stays the same over different $\sigma^{[2]}$ values, which is justified from (\ref{eq.16}). However, the optimal relativities do change. In particular, the relativities become larger in higher BM levels as $\sigma^{[2]}$ increases, for a given threshold $\varphi$. This occurs because a larger $\sigma^{[2]}$ value yields larger claims.
\item As before, the optimal threshold varies as the $\sigma^{[2]}$ value changes. For $\left(\sigma^{[2]}\right)^2=0.29$ and $1$, the optimal threshold in terms of the HMSE is found at the 90th quantile, whereas the optimal threshold is suggested to be much larger for $\left(\sigma^{[2]}\right)^2=0.01$.
\end{itemize}

\section{Data Analysis}\label{sec.6}

\subsection{ Data description and estimation results}
To examine the effect of the dependence on BM relativities, we now consider an actual insurance dataset concerning a collision coverage of new and old vehicles from the Wisconsin Local Government Property Insurance Fund (LGPIF) \citep{Frees4}. Detailed information on the project can be found at the LGPIF project website.
The collision coverage provides cover for the impact of a vehicle with an object, impact of a vehicle with an attached vehicle, and overturn of a vehicle. The observations include policyholders who have either a new collision coverage, an old collision coverage, or both. In our analysis, longitudinal data over the policy years from 2006 to 2010 with 497 governmental entities are used. There are two categorical variables identical to the dataset studied in \citet{PengAhn}: the entity type with six levels and the coverage with three levels, as shown in Table \ref{tab.x}.
Following this, we model the frequency part as
$$N_{it}|(\lambda_{i}^{[1]}, \theta_i^{[1]})\sim {\rm Poisson}(\lambda_i^{[1]}\theta_i^{[1]})\quad \hbox{with} \quad \lambda_i^{[1]}=\exp(\boldsymbol{x}_i^{[1]}\boldsymbol{\beta}^{[1]}), $$
and the individual severity part as
 \begin{equation*}
 Y_{itj}|(\lambda_{i}^{[2]}, \theta_i^{[2]}) \sim {\rm Gamma}(\lambda_i^{[2]}\theta_i^{[2]},1/\psi^{[2]}) \quad \hbox{with} \quad \lambda_i^{[2]}=\exp(\boldsymbol{x}_i^{[2]}\boldsymbol{\beta}^{[2]}),
 \end{equation*}
 where $\lambda_i^{[2]}\theta_i^{[2]}$ is the mean and $1/\psi^{[2]}$ is the shape parameter. Note that we have now introduced an additional subscript $i$ to indicate the $i$th policyholder.
 In this data analysis, we use same a priori risk characteristics for both the frequency and severity parts, that is, $\boldsymbol{x}_i^{[1]}=\boldsymbol{x}_i^{[2]}$,
and $\boldsymbol{\beta}^{[1]}$ and $\boldsymbol{\beta}^{[2]}$ are the corresponding regression coefficient vectors.
For the bivariate random effect $(\Theta^{[1]},\Theta^{[2]})$, we assume bivariate Gaussian copula $C$ in \eqref{eq.4} with the correlation coefficient $\rho$,
and the Lognormal distributions with different parameters for the marginal distributions specified as
\[
\Theta_i^{[1]} \sim {\rm Lognormal}(-\left(\sigma^{[1]}\right)^2/2, \left(\sigma^{[1]}\right)^2) \quad \hbox{and} \quad
\Theta_i^{[2]} \sim {\rm Lognormal}(-\left(\sigma^{[2]}\right)^2/2, \left(\sigma^{[2]}\right)^2).
\]
The estimation results of this model are summarized in Table \ref{est.model2}, and we refer to \citet{PengAhn} for further detailed estimation procedures.

\subsection{Optimal BM relativities in modified BMS}
With the parameter estimates in Table \ref{est.model2}, the optimal BM relativities in \eqref{eq.16} and stationary distribution in \eqref{eq.PL}
are calculated under four different BMS. As per the threshold $\varphi$, we consider the estimated 75th, 90th, 99th, and 99.9th quantile of severity distribution, and the values of the HMSE are also provided to compare the performance. The results are summarized in Table \ref{tab.dat}.\\

In Table \ref{tab.dat}(a) and (b), the BMSs with $h_1=h_2$ are identical to those in Table 2 of \citet{PengAhn}. Paying attention to the -1/+1/+2 and -1/+2/+3 systems, we note that:
\begin{itemize}
\item As $\varphi$ decreases, the proportion of policyholders increases in higher BM levels and decreases in lower BM levels, as seen from the stationary distributions. This occurs because more drivers end up in high BM levels under smaller threshold $\varphi$. This tendency is naturally more pronounced for the -1/+2/+3 system compared with the -1/+1/+2 system, as the penalty under the former system is heavier.

 \item Interestingly, the HMSEs of the optimal relativities under the -1/+1/+2 system are generally smaller than those under the -1/+2/+3 system for each given threshold. Though difficult to explain this in plain terms, we believe that the stationary distributions under the -1/+1/+2 system is more evenly spread than that under the -1/+2/+3 system, which, in turn, marginally stabilizes the BMS and produces smaller relativities. For the relationship between the stationary distribution and the optimal relativity, see Equations (\ref{eq.16})--(\ref{eq.162}).
 \item In terms of the HMSE, we see that the optimal threshold $\varphi$ must be quite large under both the -1/+1/+2 and -1/+2/+3 systems in this particular dataset.
\end{itemize}
Overall, based on the HMSE criterion, we would recommend the -1/+1/+2 transition rule with a highest threshold $\varphi$, the 99.9th quantile, as a new BMS of choice; different datasets may lead to different optimal thresholds. Though the improvement in the HMSE may seem marginal compared with the frequency-driven BMS case in this dataset, the actual amount of dollar saved can be substantial for insurers with large auto portfolios.

\section{Conclusion}
In auto insurance ratemaking, the premium based on a BMS can be understood as a discretely approximated analogy of the Bayesian premium. Traditionally, insurers have been using the frequency-driven BMS, which completely ignores the claim severity information because of mathematical convenience and industry conventions. More sophisticated BMSs have also been proposed based on both frequency and severity of claim, but these adopt the implicit assumption of independence between frequency and severity, which is often invalid in reality. A recent study by \citet{PengAhn} extends these previous works to show how one could include both frequency and severity components in a BMS that targets the aggregate severity when frequency and severity are dependent. However, its transition rule is still determined by the history of claim frequency only. Thus, the resulting BMS premium approximates the Bayesian premium, in a limited sense, and is unable to incorporate the history of claim severity. \\

In the current study, we generalize the approach of \citet{PengAhn} and propose a new BMS. The proposed BMS also targets the aggregate severity, where frequency and severity are dependent, but is more faithful to the purpose of the optimal relativity, that is, estimating of the true premium by reflecting the history of both frequency and severity. Under the proposed BMS, the new transition rule is $-1/-h_1/+h_2$ based on both frequency and severity, with some threshold on the size of severity. With this transition rule, the Bayesian premium can fully incorporate the history of both frequency and severity, and so can the BMS premium from the analogy of the two premiums.

As theoretical contributions, we analytically derive the set of optimal relativities under this new BMS. Furthermore, with the proper choice of the severity threshold and transition rule, the proposed BMS is guaranteed to outperform the alternative frequency-driven BMS in terms of the HMSE. Finding the numerically optimal choices for the bonus and malus scales and the severity threshold is possible by trial and error, but the general solution is seemingly not straightforward to obtain, which could be a possibly interesting research topic in the future.\\

\section*{Acknowledgements}

Rosy Oh was supported by Basic Science Research Program through the
National Research Foundation of Korea(NRF) funded by the Ministry
of Education (Grant No. 2019R1A6A1A11051177). Joseph H.T. Kim is grateful for the support of National Research Foundation of Korea (NRF 2019R1F1A1058095).
Jae Youn Ahn was supported by a National Research Foundation of Korea
(NRF) grant funded by the Korean Government (NRF-2017R1D1A1B03032318).

\bibliographystyle{apalike}
\bibliography{Bib_Yonsei}

\pagebreak

\appendix
\section*{Appendix}
\section{Proofs}\label{app.a}

\begin{proof}[Proof for Theorem \ref{thm.1}] 
Before we provide the proof, we found it
beneficial to define $T_{\ell,\ell^*}(\cdot, \cdot): \{0, 1, \cdots, z\}\times \{0, 1, \cdots, z\} \mapsto\{0, 1\}$, which indicates whether the BM level has changed from $\ell$ in the current year to $\ell^*$ in the next year if there were $k_1$ number of Type I claims and $k_2$ number of Type II claims in the current year. Mathematically, it is defined as, for $\left(k_1, k_2\right) = (0,0)$,
\begin{equation}\label{ahn.eq.1}
 T_{\ell,\ell^*}\left(k_1, k_2\right)=\begin{cases}
 1, & \hbox{if}\quad \max\{\ell-1, 0\}=\ell^*;\\
 0, &\hbox{otherwise};
 \end{cases}
\end{equation}
and, for $\left(k_1, k_2\right)\neq (0,0)$,
\begin{equation}\label{ahn.eq.2}
 T_{\ell,\ell^*}\left(k_1, k_2\right)=\begin{cases}
 1, & \hbox{if}\quad \min\{\ell+h_1 k_1 + h_2 k_2, z\}=\ell^*;\\
 0, &\hbox{otherwise}.
 \end{cases}
\end{equation}

Now, we prove the theorem.
First, observe that
\[
\E{S_{T+1}\big\vert \lambda_\kappa^{[1]}, \lambda_\kappa^{[2]}, \mathcal{F}_T }
=
\argmin\limits_{\eta}\E{\left(\Theta^{[1]}\Theta^{[2]}\lambda_\kappa^{[1]}\lambda_\kappa^{[2]}-\eta\left( \lambda_\kappa^{[1]}, \lambda_\kappa^{[2]}, \mathcal{F}_T\right)
 \right)^2
 \big\vert \lambda_\kappa^{[1]}, \lambda_\kappa^{[2]}, \mathcal{F}_T },
\]
where the right-hand side is optimized among any function $\eta$. Then, the proof follows as such:
\[
\begin{aligned}
&\E{\left(S_{T+1}-\E{S_{T+1}\big\vert \Lambda^{[1]}, \Lambda^{[2]}, \mathcal{F}_{T}^{\rm [full]}}
 \right)^2
 }\\
 &\quad\quad\quad\quad\quad\quad=
 \E{\E{\left(S_{T+1}-\E{S_{T+1}\big\vert \Lambda^{[1]}, \Lambda^{[2]}, \mathcal{F}_{T}^{\rm [full]}}
 \right)^2
 \big\vert \mathcal{F}_T, \Lambda^{[1]}, \Lambda^{[2]}
 }
 }.
 \end{aligned}
\]
\end{proof}

\begin{proof}[Proof for Theorem \ref{ahn.thm.2}] 
 By definition of transition rule in \eqref{ahn.def.1}, it is known that the BM level $L_{t+1}$ is completely explained by
$L_t$ and $\left(N_t^{\{I\}}, N_t^{\{II\}} \right)$.
Now, the transition probability can be calculated as
\begin{equation}\label{eq.b8}
\begin{aligned}
 &p_{\ell, \ell^*}\left(\lambda_\kappa^{[1]},\lambda_\kappa^{[2]},\theta^{[1]},\theta^{[2]} \right)\\
 &=\P{L_{t+1}=\ell^*\big\vert \lambda_\kappa^{[1]},\lambda_\kappa^{[2]},\theta^{[1]},\theta^{[2]}, L_t=\ell }\\
 &=\sum\limits_{k_1=0}^{\infty}\sum\limits_{k_2=0}^{\infty}
 \P{L_{t+1}=\ell^*\big\vert
 \lambda_\kappa^{[1]},\lambda_\kappa^{[2]},\theta^{[1]},\theta^{[2]}, N_t^{\{I\}}=k_1, N_t^{\{II\}}=k_2, L_t=\ell}\\
 &\quad\quad\quad\quad\quad\quad\quad\quad\quad\quad\quad\quad\quad\quad
 \P{N_t^{\{I\}}=k_1, N_t^{\{II\}}=k_2\big\vert
 \lambda_\kappa^{[1]},\lambda_\kappa^{[2]},\theta^{[1]},\theta^{[2]}, L_t=\ell}\\
 &=\sum\limits_{k_1=0}^{\infty}\sum\limits_{k_2=0}^{\infty}
 \P{L_{t+1}=\ell^*\big\vert N_t^{\{I\}}=k_1, N_t^{\{II\}}=k_2, L_t=\ell}\\
 &\quad\quad\quad\quad\quad\quad\quad\quad\quad\quad\quad\quad\quad\quad
 \P{N_t^{\{I\}}=k_1, N_t^{\{II\}}=k_2\big\vert \lambda_\kappa^{[1]},\lambda_\kappa^{[2]},\theta^{[1]},\theta^{[2]}, L_t=\ell}\\
 &=\sum\limits_{k_1=0}^{\infty}\sum\limits_{k_2=0}^{\infty} T_{\ell,\ell^*}(k_1, k_2) \P{N_t^{\{I\}}=k_1, N_t^{\{II\}}=k_2 \big\vert L_t=\ell, \lambda_\kappa^{[1]},\lambda_\kappa^{[2]},\theta^{[1]},\theta^{[2]}}\\
 &=\sum\limits_{k_1 =0}^{\infty}\sum\limits_{k_2 =0}^{\infty} T_{\ell,\ell^*}(k_1, k_2 ) \P{N_t^{\{I\}}=k_1, N_t^{\{II\}}=k_2 \big\vert \lambda_\kappa^{[1]},\lambda_\kappa^{[2]},\theta^{[1]},\theta^{[2]}},\\
\end{aligned}
\end{equation}
where the last equality comes from the independence assumption among frequency, among individual severities, and between frequency and individual severity for given pair of risk characteristics, $\left(\lambda_\kappa^{[1]},\lambda_\kappa^{[2]},\theta^{[1]},\theta^{[2]}\right)$.

Now, we consider the case $\ell<\ell^*$ or $\ell=\ell^*=z$.
Then, for $(k_1, k_2)\in\mathbb{N}_0\times\mathbb{N}_0$, we obtain
\begin{equation}\label{eq.b6}
T_{\ell,\ell^*}(k_1, k_2 )=
\begin{cases}
 1, & \ell^*=\max\{\ell+k_1h1+k_2h_2, z\};\\
 0, & \hbox{elsewhere},
\end{cases}
\end{equation}
and
\begin{equation}\label{eq.b7}
\begin{aligned}
&\P{N_t^{\{I\}}=k_1, N_t^{\{II\}}=k_2 \Big\vert \lambda_\kappa^{[1]},\lambda_\kappa^{[2]},\theta^{[1]},\theta^{[2]}}\\
&=\P{N_t^{\{I\}}=k_1, N_t^{\{II\}}=k_2, N_t=k_1+k_2 \Big\vert \lambda_\kappa^{[1]},\lambda_\kappa^{[2]},\theta^{[1]},\theta^{[2]}}
\\
&=\P{N_t^{\{I\}}=k_1, N_t^{\{II\}}=k_2\Big\vert \lambda_\kappa^{[1]},\lambda_\kappa^{[2]},\theta^{[1]},\theta^{[2]}, N_t=k_1+k_2}
\P{N_t=k_1+k_2 \Big\vert \lambda_\kappa^{[1]},\lambda_\kappa^{[2]},\theta^{[1]},\theta^{[2]}}\\
&=\P{N_t=k_1+k_2\Big\vert \lambda_\kappa^{[1]},\theta^{[1]}}
{k_1+k_2 \choose k_1} \left(\P{Y_{t,j}>\varphi\Big\vert \lambda_\kappa^{[2]},\theta^{[2]} }\right)^{k_2}
 \left(1-\P{Y_{t,j}>\varphi\Big\vert \lambda_\kappa^{[2]},\theta^{[2]} }\right)^{k_1}.
\end{aligned}
\end{equation}
Then, combining \eqref{eq.b8}, \eqref{eq.b6}, and \eqref{eq.b7}, we have the first line in \eqref{eq.b5}.

Now, consider the case $\ell^*=\max\left\{\ell-1, 0 \right\}$. Then, we have
\begin{equation}\label{eq.b10}
T_{\ell,\ell^*}(k_1, k_2 )=
\begin{cases}
 1, & k_1=k_2=0,\\
 0, & \hbox{elsewhere},
\end{cases}
\end{equation}
Hence, combining \eqref{eq.b8} and \eqref{eq.b10}, conclude the second line in \eqref{eq.b5}.

In case the pair $(\ell, \ell^*)$ satisfies none of the following
\[
\ell<\ell^*,\quad \ell=\ell^*=z, \quad\hbox{or}\quad \ell^*=\max\left\{\ell-1, 0 \right\},
\]
we obtain
\[
T_{\ell,\ell^*}(k_1, k_2 )=0
\]
which concludes the proof.
\end{proof}

\begin{proof}[Proof of Theorem \ref{thm.oh.1}]  

Under the frequency--severity model, we obtain
\[
\begin{aligned}
&\E{\left(\E{S_{T+1} \big\vert \Lambda^{[1]},\Lambda^{[2]},\Theta^{[1]},\Theta^{[2]}}-\Lambda^{[1]} \Lambda^{[2]} r^{\rm [new]}(L) \right)^2}\\
&=\sum\limits_{\ell=0}^{z} \E{\left(\E{S_{T+1} \big\vert \Lambda^{[1]},\Lambda^{[2]},\Theta^{[1]},\Theta^{[2]}}-\Lambda^{[1]} \Lambda^{[2]} r^{\rm [new]}(L) \right)^2\Big\vert L=\ell}\P{L=\ell}
\end{aligned}
\]
which further implies that the optimal solution $\tilde{r}^{\rm [new]}(\ell)$ in \eqref{eq.67} satisfies
\begin{equation}\label{eq.p1}
0=\E{-2\Lambda^{[1]} \Lambda^{[2]} \left(\E{S_{T+1} \big\vert \Lambda^{[1]},\Lambda^{[2]},\Theta^{[1]},\Theta^{[2]}} - \tilde{r}^{\rm [new]}(L) \Lambda^{[1]} \Lambda^{[2]} \right)\big\vert L=\ell}
\end{equation}
for $\ell=0, \cdots,{z}$. Then, \eqref{eq.p1} is equivalent with \eqref{eq.16}.

First, we calculate $\E{\left( \Lambda^{[1]}\Lambda^{[2]}\right)^2 \Theta^{[1]}\Theta^{[2]}\bigg\vert L=\ell}$ in \eqref{eq.16} as
\begin{equation*}
\begin{aligned}
&\E{\left( \Lambda^{[1]}\Lambda^{[2]}\right)^2 \Theta^{[1]}\Theta^{[2]}\bigg\vert L=\ell}\\
&=\sum\limits_{\kappa\in\mathcal{K}} \left(\lambda_{\kappa}^{[1]}\lambda_{\kappa}^{[2]}\right)^2
 \E{\Theta^{[1]}\Theta^{[2]} \big\vert \lambda_{\kappa}^{[1]}, \lambda_{\kappa}^{[2]}, L=\ell}\P{\lambda_{\kappa}^{[1]}, \lambda_{\kappa}^{[2]}\big\vert L=\ell}\\
&=\sum\limits_{\kappa\in\mathcal{K}} \left(\lambda_{\kappa}^{[1]}\lambda_{\kappa}^{[2]}\right)^2
 \iint \theta^{[1]}\theta^{[2]} f(\theta^{[1]}, \theta^{[2]}\big\vert \lambda_{\kappa}^{[1]}, \lambda_{\kappa}^{[2]}, L=\ell) {\rm d}\theta^{[1]}{\rm d}\theta^{[2]}\,
 \P{\lambda_{\kappa}^{[1]}, \lambda_{\kappa}^{[2]} \big\vert L=\ell}\\
&=
\frac{\sum\limits_{\kappa\in\mathcal{K}} w_{\kappa} \left(\lambda_{\kappa}^{[1]}\lambda_{\kappa}^{[2]}\right)^2
\iint \theta^{[1]}\theta^{[2]} \pi_\ell\left(\lambda_{\kappa}^{[1]}\theta^{[1]}, \lambda_{\kappa}^{[2]}\theta^{[2]}\right) h(\theta^{[1]}, \theta^{[2]}){\rm d}\theta^{[1]}{\rm d} \theta^{[2]}}
{\sum\limits_{\kappa\in\mathcal{K}}w_{\kappa}
\iint \pi_\ell\left(\lambda_{\kappa}^{[1]}\theta^{[1]}, \lambda_{\kappa}^{[2]}\theta^{[2]}
\right) h(\theta^{[1]}, \theta^{[2]}){\rm d}\theta^{[1]}{\rm d} \theta^{[2]} }
,
\end{aligned}
\end{equation*}
 where the last equation comes from the independence between the a priori and posteriori.
Similarly, we can calculate $\E{\left( \Lambda^{[1]}\Lambda^{[2]}\right)^2\big\vert L=\ell} $ in \eqref{eq.16} as
\[
\begin{aligned}
\E{\left( \Lambda^{[1]}\Lambda^{[2]}\right)^2\big\vert L=\ell}
&=\sum\limits_{\kappa\in\mathcal{K}} \left(\lambda_{\kappa}^{[1]}\lambda_{\kappa}^{[2]}\right)^2
 \P{ \lambda_{\kappa}^{[1]}, \lambda_{\kappa}^{[2]}\big\vert L=\ell}\\
&=\sum\limits_{\kappa\in\mathcal{K}} \left(\lambda_{\kappa}^{[1]}\lambda_{\kappa}^{[2]}\right)^2 \frac{\P{L=\ell\big\vert \lambda_{\kappa}^{[1]}, \lambda_{\kappa}^{[2]}}\P{\lambda_{\kappa}^{[1]}, \lambda_{\kappa}^{[2]}} }{\P{L=\ell}}\\
&=\frac{\sum\limits_{\kappa\in\mathcal{K}} w_{\kappa} \left(\lambda_{\kappa}^{[1]}\lambda_{\kappa}^{[2]}\right)^2 \iint \pi_\ell\left(\lambda_{\kappa}^{[1]}\theta^{[1]}, \lambda_{\kappa}^{[2]}\theta^{[2]}
\right)h(\theta^{[1]}, \theta^{[2]}){\rm d}\theta^{[1]}{\rm d} \theta^{[2]}}
{\sum\limits_{\kappa\in\mathcal{K}} w_{\kappa} \iint \pi_\ell\left(\lambda_{\kappa}^{[1]}\theta^{[1]}, \lambda_{\kappa}^{[2]}\theta^{[2]}
\right) h(\theta^{[1]}, \theta^{[2]}){\rm d}\theta^{[1]}{\rm d} \theta^{[2]}},
\end{aligned}
\]
 where the last equation comes from the independence between the a priori and posteriori.

\end{proof}

\section{Derivation of Bayesian Premium}\label{app.b}

For the derivation of \eqref{eq.26} in section \eqref{sec.4.3} from \eqref{eq.22}, it is satisfactory to derive the following conditional distribution of the bivariate residual effect:
\begin{equation}\label{eq.24}
\begin{aligned}
&h\left(\theta_1, \theta_2\big\vert \lambda_\kappa^{[1]}, \lambda_\kappa^{[2]}, \mathcal{F}_{T}^{\rm [full]}\right) \\
&\quad=
h\left(\theta_1, \theta_2\big\vert (n_1, s_1), \cdots, (n_T, s_T), \lambda_\kappa^{[1]}, \lambda_\kappa^{[2]}\right)\\
&\quad=\frac{h\left(\theta_1, \theta_2\big\vert \lambda_\kappa^{[1]}, \lambda_\kappa^{[2]} \right)\prod\limits_{t=1}^{T}\left[f_1\left(n_t\big\vert \lambda_\kappa^{[1]}, \lambda_\kappa^{[2]}, \theta_1, \theta_2\right)f_2\left(s_t\big\vert \lambda_\kappa^{[1]}, \lambda_\kappa^{[2]}, \theta_1, \theta_2, n_t \right)\right]}{\prod\limits_{t=1}^{T} f_1\left(n_t\big\vert \lambda_\kappa^{[1]}, \lambda_\kappa^{[2]} \right)f_2\left(s_t\big\vert \lambda_\kappa^{[1]}, \lambda_\kappa^{[2]}, n_t\right)}\\
&\quad=\frac{h\left(\theta_1, \theta_2 \right)\prod\limits_{t=1}^{T}\left[f_1\left(n_t\big\vert \lambda_\kappa^{[1]}, \lambda_\kappa^{[2]}, \theta_1, \theta_2 \right)
f_2\left(s_t\big\vert \lambda_\kappa^{[1]}, \lambda_\kappa^{[2]}, \theta_1, \theta_2, n_t \right)\right]}{\prod\limits_{t=1}^{T} f_1\left(n_t\big\vert \lambda_\kappa^{[1]}, \lambda_\kappa^{[2]} \right)f_2\left(s_t\big\vert \lambda_\kappa^{[1]}, \lambda_\kappa^{[2]}, n_t\right)}.\\
\end{aligned}
\end{equation}
After tedious calculation, \eqref{eq.24} can be obtained as follow:
\begin{equation}\label{eq.25}
\begin{aligned}
&h\left(\theta_1, \theta_2\big\vert \lambda_\kappa^{[1]}, \lambda_\kappa^{[2]}, \mathcal{F}_{T}^{\rm [full]}\right)\\
&=
a_1^{**} {\rm dgamma}\left( \theta_1, \sum_{t=1}^{T} n_t +1, \lambda_\kappa^{[1]}T+c_1\right)
{\rm dgamma}\left( \theta_2, \sum_{t=1}^{T} s_t +1, \lambda_\kappa^{[2]} \left(\sum_{t=1}^{T}n_t \right) +c_1\right)\\
&\quad +a_2^{**} {\rm dgamma}\left( \theta_1, \sum_{t=1}^{T} n_t +1, \lambda_\kappa^{[1]}T+c_2\right)
{\rm dgamma}\left( \theta_2, \sum_{t=1}^{T} s_t +1, \lambda_\kappa^{[2]} \left(\sum_{t=1}^{T}n_t \right) +c_2 \right),\\
\end{aligned}
\end{equation}
where ${\rm dgamma}(\cdot, \alpha_1, \alpha_2)$ is a density function of gamma distribution of shape $=\alpha_1$ and rate $=\alpha_2$.
Finally, from \eqref{eq.25}, the premium defined in \eqref{eq.22} can be obtained.


\section{Miscellaneous}\label{app.c}

\begin{lemma}\label{lem.2}
Assume
\begin{equation}\label{eq.11}
\P{N_t>0\big\vert \lambda_\kappa^{[1]},\lambda_\kappa^{[2]},\theta^{[1]},\theta^{[2]}}>0
\end{equation}
and
\begin{equation}\label{eq.12}
\P{N_t=0\big\vert \lambda_\kappa^{[1]},\lambda_\kappa^{[2]},\theta^{[1]},\theta^{[2]}}>0.
\end{equation}
Then, for the given observed and residual effect characteristic $\lambda_\kappa^{[1]},\lambda_\kappa^{[2]}, \theta^{[1]}, \theta^{[2]}$,
we obtain
\[
\begin{aligned}
 \boldsymbol{\pi}\left(\lambda_\kappa^{[1]},\lambda_\kappa^{[2]},\theta^{[1]},\theta^{[2]}\right)
 &=\boldsymbol{e}^{\mathrm T}\left(\boldsymbol{I}- \boldsymbol{P}\left(\lambda_\kappa^{[1]},\lambda_\kappa^{[2]},\theta^{[1]},\theta^{[2]}\right) + \boldsymbol{E}\right)^{-1},
\end{aligned}
\]
where $\boldsymbol{E}$ is the $(z+1)\times (z+1)$ matrix all of whose entries are 1 and $\boldsymbol{e}$ is $(z+1)$ column vector all of whose entries are 1.
\end{lemma}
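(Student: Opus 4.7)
The plan is to exploit the two defining properties of the stationary distribution and the factorization $\boldsymbol{E} = \boldsymbol{e}\boldsymbol{e}^{\mathrm T}$ to reduce everything to an invertibility statement. By definition, the stationary row vector $\boldsymbol{\pi}^{\mathrm T}$ satisfies $\boldsymbol{\pi}^{\mathrm T}\boldsymbol{P} = \boldsymbol{\pi}^{\mathrm T}$ (equivalently $\boldsymbol{\pi}^{\mathrm T}(\boldsymbol{I}-\boldsymbol{P}) = \boldsymbol{0}^{\mathrm T}$) and $\boldsymbol{\pi}^{\mathrm T}\boldsymbol{e} = 1$ (normalization). Since $\boldsymbol{E} = \boldsymbol{e}\boldsymbol{e}^{\mathrm T}$, the normalization gives $\boldsymbol{\pi}^{\mathrm T}\boldsymbol{E} = (\boldsymbol{\pi}^{\mathrm T}\boldsymbol{e})\boldsymbol{e}^{\mathrm T} = \boldsymbol{e}^{\mathrm T}$. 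Adding the two identities yields $\boldsymbol{\pi}^{\mathrm T}(\boldsymbol{I}-\boldsymbol{P}+\boldsymbol{E}) = \boldsymbol{e}^{\mathrm T}$. Provided the matrix on the left is invertible, we can right-multiply by its inverse and read off the claimed formula. Uniqueness of the stationary distribution (which follows once invertibility is established) ensures this is the only solution.

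The substantive step is therefore verifying that $\boldsymbol{I}-\boldsymbol{P}+\boldsymbol{E}$ is nonsingular, and this is where hypotheses \eqref{eq.11} and \eqref{eq.12} enter. The strategy is to first use them to show the chain on $\{0,\dots,z\}$ is irreducible and aperiodic. Condition \eqref{eq.12} guarantees that every $-1$ downward transition has positive probability, so from any level $\ell$ the chain reaches $0$ in at most $\ell$ steps; condition \eqref{eq.11} guarantees at least one strictly positive upward jump, so from state $0$ any level $\ell^{*} \le z$ can be reached by a sequence of claim years (padded with downward steps if needed). Hence all states communicate, and the self-loop at $0$ produced by \eqref{eq.12} supplies aperiodicity. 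Standard finite-state Markov chain theory then gives a unique stationary distribution $\boldsymbol{\pi}$, and the right null space of $\boldsymbol{I}-\boldsymbol{P}$ is one-dimensional, spanned by $\boldsymbol{e}$.

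With these facts in hand, invertibility follows from a short null-space argument. Suppose $(\boldsymbol{I}-\boldsymbol{P}+\boldsymbol{E})\boldsymbol{v} = \boldsymbol{0}$ for some column vector $\boldsymbol{v}$. Writing $c := \boldsymbol{e}^{\mathrm T}\boldsymbol{v}$, we get $(\boldsymbol{I}-\boldsymbol{P})\boldsymbol{v} = -c\,\boldsymbol{e}$. Left-multiplying by $\boldsymbol{\pi}^{\mathrm T}$ and using $\boldsymbol{\pi}^{\mathrm T}(\boldsymbol{I}-\boldsymbol{P}) = \boldsymbol{0}^{\mathrm T}$ and $\boldsymbol{\pi}^{\mathrm T}\boldsymbol{e} = 1$ forces $c = 0$, so $\boldsymbol{v}$ lies in the right null space of $\boldsymbol{I}-\boldsymbol{P}$, hence $\boldsymbol{v} = a\boldsymbol{e}$ for some scalar $a$. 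But then $c = (z+1)\,a = 0$ implies $a = 0$, so $\boldsymbol{v} = \boldsymbol{0}$.

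The main obstacle I expect is not the algebra, which is routine, but making the ergodicity argument under the $-1/+h_1/+h_2$ transition rule clean: in particular, one must verify that the combined ability to climb (from \eqref{eq.11}) and to descend one step at a time (from \eqref{eq.12}) really does connect every pair of BM levels despite the truncations $\max\{\cdot,0\}$ and $\min\{\cdot,z\}$ that are built into the definition in \eqref{ahn.def.1}. Once irreducibility and aperiodicity are in place, the remainder is essentially the classical derivation of the fundamental matrix formula for stationary distributions and can be stated in a few lines.
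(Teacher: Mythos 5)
Your argument is correct, and its logical skeleton is the same as the paper's: both proofs hinge on showing that hypotheses \eqref{eq.11} and \eqref{eq.12} make the transition matrix $\boldsymbol{P}$ regular (irreducible and aperiodic), after which the identity $\boldsymbol{\pi}^{\mathrm T}=\boldsymbol{e}^{\mathrm T}(\boldsymbol{I}-\boldsymbol{P}+\boldsymbol{E})^{-1}$ is the classical fundamental-matrix formula. The difference is one of self-containment: the paper disposes of the second step in one line by citing Property 4.2 of Denuit et al., whereas you rederive it from scratch --- the factorization $\boldsymbol{E}=\boldsymbol{e}\boldsymbol{e}^{\mathrm T}$, the identity $\boldsymbol{\pi}^{\mathrm T}(\boldsymbol{I}-\boldsymbol{P}+\boldsymbol{E})=\boldsymbol{e}^{\mathrm T}$, and the null-space argument for invertibility are all correct and are exactly the content of the cited result. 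You also spell out the ergodicity verification (descent to $0$ via \eqref{eq.12}, ascent and communication via \eqref{eq.11}, aperiodicity from the self-loop at level $0$) that the paper merely asserts; your caution about the truncations in \eqref{ahn.def.1} and about whether every intermediate level is reachable under general step sizes $h_1,h_2$ is warranted, since the paper glosses over this. In short: same route, but you prove what the paper cites, which makes your version longer and more verifiable at the cost of repeating a standard textbook derivation.
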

\begin{proof}
 Note that \eqref{eq.11} and \eqref{eq.12} implies that the matrix
 \[
 \boldsymbol{P}\left(\Lambda^{[1]},\Lambda^{[2]},\Theta^{[1]},\Theta^{[2]}\right)
 \]
 is regular. Hence, the remaining proof follows from Property 4.2 in \citet{Denuit2}, for example, among many other references.
\end{proof}

\begin{table}[p]
 \caption{ Optimal BM relativities and stationary distribution of $L$ for various threshold $\varphi$ and dependence parameter $\rho$ under the two BMSs: the -1/+1 and -1/+1/+2 systems } \label{tab.ex2}
 \centering
 \resizebox{\linewidth}{!}{
 \begin{tabular}{ l c c c c c c c c c c c c c c c c c c c }
 \\
 \multicolumn{12}{l}{(a) For $\rho=-0.8$}\\

 \hline
&& \multicolumn{2}{l}{-1/+1 system} && \multicolumn{6}{l}{-1/+1/+2 system}\\\cline{3-4} \cline{6-16}
 & $\varphi$& \multicolumn{2}{c}{-} && \multicolumn{2}{c}{8200} && \multicolumn{2}{c}{16800} && \multicolumn{2}{c}{48100}&& \multicolumn{2}{c}{94300} \\\cline{3-4} \cline{6-7}\cline{9-10}\cline{12-13}\cline{15-16}
Level $\ell$ && $r(\ell)$ & $\P{L=\ell}$ && $r(\ell)$ & $\P{L=\ell}$ && $r(\ell)$ & $\P{L=\ell}$ && $r(\ell)$ & $\P{L=\ell}$ && $r(\ell)$ & $\P{L=\ell}$ \\
 \hline
9	&&	1.328 	&	0.135 	&&	1.292 	&	0.148 	&&	1.320 	&	0.139 	&&	1.328 	&	0.135 	&&	1.328 	&	0.135 	\\
8	&&	1.052 	&	0.055 	&&	1.018 	&	0.064 	&&	1.046 	&	0.057 	&&	1.052 	&	0.055 	&&	1.052 	&	0.055 	\\
7	&&	0.936 	&	0.034 	&&	0.898 	&	0.041 	&&	0.928 	&	0.036 	&&	0.936 	&	0.034 	&&	0.936 	&	0.034 	\\
6	&&	0.858 	&	0.026 	&&	0.815 	&	0.033 	&&	0.849 	&	0.028 	&&	0.859 	&	0.026 	&&	0.858 	&	0.026 	\\
5	&&	0.795 	&	0.024 	&&	0.746 	&	0.031 	&&	0.782 	&	0.027 	&&	0.796 	&	0.024 	&&	0.796 	&	0.024 	\\
4	&&	0.737 	&	0.026 	&&	0.681 	&	0.034 	&&	0.719 	&	0.030 	&&	0.737 	&	0.027 	&&	0.737 	&	0.026 	\\
3	&&	0.676 	&	0.034 	&&	0.619 	&	0.043 	&&	0.654 	&	0.039 	&&	0.675 	&	0.035 	&&	0.676 	&	0.034 	\\
2	&&	0.607 	&	0.055 	&&	0.545 	&	0.067 	&&	0.579 	&	0.061 	&&	0.604 	&	0.056 	&&	0.606 	&	0.055 	\\
1	&&	0.522 	&	0.114 	&&	0.493 	&	0.096 	&&	0.510 	&	0.107 	&&	0.521 	&	0.113 	&&	0.522 	&	0.114 	\\
0	&&	0.414 	&	0.496 	&&	0.395 	&	0.445 	&&	0.406 	&	0.476 	&&	0.413 	&	0.495 	&&	0.414 	&	0.496 	\\
\hline\hline
HMSE	&& \multicolumn{2}{c}{1.297}	&&	\multicolumn{2}{c}{1.293}&&	\multicolumn{2}{c}{1.282}&&	\multicolumn{2}{c}{1.295}&&	\multicolumn{2}{c}{1.297}	\\ \hline
\end{tabular}
}

 \bigskip

 \resizebox{\linewidth}{!}{
 \begin{tabular}{ l c c c c c c c c c c c c c c c c c c c }
 \multicolumn{12}{l}{(b) For $\rho=-0.4$}\\

 \hline
&& \multicolumn{2}{l}{-1/+1 system} && \multicolumn{6}{l}{-1/+1/+2 system}\\\cline{3-4} \cline{6-16}
 & $\varphi$& \multicolumn{2}{c}{-} && \multicolumn{2}{c}{8200} && \multicolumn{2}{c}{16800} && \multicolumn{2}{c}{48100}&& \multicolumn{2}{c}{94300} \\\cline{3-4} \cline{6-7}\cline{9-10}\cline{12-13}\cline{15-16}
Level $\ell$ && $r(\ell)$ & $\P{L=\ell}$ && $r(\ell)$ & $\P{L=\ell}$ && $r(\ell)$ & $\P{L=\ell}$ && $r(\ell)$ & $\P{L=\ell}$ && $r(\ell)$ & $\P{L=\ell}$ \\
 \hline
9	&&	2.047 	&	0.135 	&&	1.968 	&	0.151 	&&	2.026 	&	0.141 	&&	2.047 	&	0.136 	&&	2.047 	&	0.135 	\\
8	&&	1.460 	&	0.055 	&&	1.398 	&	0.065 	&&	1.449 	&	0.059 	&&	1.462 	&	0.055 	&&	1.461 	&	0.055 	\\
7	&&	1.237 	&	0.034 	&&	1.173 	&	0.041 	&&	1.225 	&	0.037 	&&	1.240 	&	0.034 	&&	1.238 	&	0.034 	\\
6	&&	1.096 	&	0.026 	&&	1.025 	&	0.033 	&&	1.080 	&	0.029 	&&	1.099 	&	0.026 	&&	1.097 	&	0.026 	\\
5	&&	0.987 	&	0.024 	&&	0.907 	&	0.031 	&&	0.964 	&	0.027 	&&	0.989 	&	0.024 	&&	0.987 	&	0.024 	\\
4	&&	0.889 	&	0.026 	&&	0.801 	&	0.033 	&&	0.859 	&	0.029 	&&	0.890 	&	0.027 	&&	0.889 	&	0.026 	\\
3	&&	0.791 	&	0.034 	&&	0.701 	&	0.041 	&&	0.754 	&	0.038 	&&	0.790 	&	0.035 	&&	0.791 	&	0.035 	\\
2	&&	0.683 	&	0.055 	&&	0.595 	&	0.064 	&&	0.643 	&	0.060 	&&	0.680 	&	0.056 	&&	0.683 	&	0.055 	\\
1	&&	0.559 	&	0.114 	&&	0.501 	&	0.095 	&&	0.527 	&	0.106 	&&	0.554 	&	0.113 	&&	0.559 	&	0.114 	\\
0	&&	0.411 	&	0.496 	&&	0.376 	&	0.445 	&&	0.392 	&	0.475 	&&	0.408 	&	0.494 	&&	0.411 	&	0.496 	\\
\hline\hline
HMSE	&& \multicolumn{2}{c}{6.022}&&	\multicolumn{2}{c}{5.995}&&	\multicolumn{2}{c}{5.930}&&	\multicolumn{2}{c}{5.996}&&	\multicolumn{2}{c}{6.018}	\\ \hline
\end{tabular}
}

 \bigskip
 \resizebox{\linewidth}{!}{
 \begin{tabular}{ l c c c c c c c c c c c c c c c c c c c }
 \multicolumn{12}{l}{(c) For $\rho=0.4$}\\

 \hline
&& \multicolumn{2}{l}{-1/+1 system} && \multicolumn{6}{l}{-1/+1/+2 system}\\\cline{3-4} \cline{6-16}
 & $\varphi$& \multicolumn{2}{c}{-} && \multicolumn{2}{c}{8200} && \multicolumn{2}{c}{16800} && \multicolumn{2}{c}{48100}&& \multicolumn{2}{c}{94300} \\\cline{3-4} \cline{6-7}\cline{9-10}\cline{12-13}\cline{15-16}
Level $\ell$ && $r(\ell)$ & $\P{L=\ell}$ && $r(\ell)$ & $\P{L=\ell}$ && $r(\ell)$ & $\P{L=\ell}$ && $r(\ell)$ & $\P{L=\ell}$ && $r(\ell)$ & $\P{L=\ell}$ \\
 \hline
9	&&	4.489 	&	0.135 	&&	4.129 	&	0.156 	&&	4.338 	&	0.145 	&&	4.478 	&	0.136 	&&	4.489 	&	0.135 	\\
8	&&	2.512 	&	0.055 	&&	2.285 	&	0.067 	&&	2.423 	&	0.060 	&&	2.508 	&	0.055 	&&	2.512 	&	0.055 	\\
7	&&	1.918 	&	0.034 	&&	1.715 	&	0.042 	&&	1.837 	&	0.037 	&&	1.913 	&	0.034 	&&	1.918 	&	0.034 	\\
6	&&	1.581 	&	0.026 	&&	1.387 	&	0.033 	&&	1.501 	&	0.029 	&&	1.575 	&	0.026 	&&	1.580 	&	0.026 	\\
5	&&	1.341 	&	0.024 	&&	1.154 	&	0.030 	&&	1.260 	&	0.027 	&&	1.334 	&	0.024 	&&	1.340 	&	0.024 	\\
4	&&	1.142 	&	0.026 	&&	0.964 	&	0.032 	&&	1.062 	&	0.029 	&&	1.134 	&	0.027 	&&	1.141 	&	0.026 	\\
3	&&	0.958 	&	0.034 	&&	0.797 	&	0.039 	&&	0.881 	&	0.037 	&&	0.948 	&	0.035 	&&	0.957 	&	0.034 	\\
2	&&	0.772 	&	0.055 	&&	0.636 	&	0.060 	&&	0.704 	&	0.057 	&&	0.762 	&	0.055 	&&	0.771 	&	0.055 	\\
1	&&	0.578 	&	0.114 	&&	0.487 	&	0.094 	&&	0.524 	&	0.105 	&&	0.567 	&	0.113 	&&	0.576 	&	0.114 	\\
0	&&	0.371 	&	0.496 	&&	0.322 	&	0.448 	&&	0.344 	&	0.476 	&&	0.366 	&	0.494 	&&	0.370 	&	0.496 	\\
\hline\hline
HMSE	&& \multicolumn{2}{c}{53.792}	&&	\multicolumn{2}{c}{55.196}&&	\multicolumn{2}{c}{54.138}&&	\multicolumn{2}{c}{53.725}&&	\multicolumn{2}{c}{53.772}	\\ \hline
\end{tabular}
}

\end{table}


\begin{table}[p]
 \caption{Optimal BM relativities and stationary distribution of $L$ for various threshold $\varphi$ and $\lambda^{[1]}$ under the three BMSs: the -1/+1, -1/+1/+2, and -1/+1/+3 systems } \label{tab.ex3}
 \centering
 \resizebox{\linewidth}{!}{
 \begin{tabular}{ l c c c c c c c c c c c c c c c c c c c }
 \\
 \multicolumn{12}{l}{(a) -1/+1 system and -1/+1/+2 system for $\lambda^{[1]}=0.5$}\\

 \hline
&& \multicolumn{2}{l}{-1/+1 system} && \multicolumn{6}{l}{-1/+1/+2 system}\\\cline{3-4} \cline{6-16}
 & $\varphi$& \multicolumn{2}{c}{-} && \multicolumn{2}{c}{8200} && \multicolumn{2}{c}{16800} && \multicolumn{2}{c}{48100}&& \multicolumn{2}{c}{94300} \\\cline{3-4} \cline{6-7}\cline{9-10}\cline{12-13}\cline{15-16}
Level $\ell$ && $r(\ell)$ & $\P{L=\ell}$ && $r(\ell)$ & $\P{L=\ell}$ && $r(\ell)$ & $\P{L=\ell}$ && $r(\ell)$ & $\P{L=\ell}$ && $r(\ell)$ & $\P{L=\ell}$ \\
 \hline
9	&&	3.070 	&	0.135 	&&	2.893 	&	0.154 	&&	3.009 	&	0.143 	&&	3.069 	&	0.136 	&&	3.071 	&	0.135 	\\
8	&&	1.951 	&	0.055 	&&	1.827 	&	0.066 	&&	1.915 	&	0.059 	&&	1.954 	&	0.055 	&&	1.952 	&	0.055 	\\
7	&&	1.572 	&	0.034 	&&	1.451 	&	0.042 	&&	1.535 	&	0.037 	&&	1.574 	&	0.034 	&&	1.573 	&	0.034 	\\
6	&&	1.344 	&	0.026 	&&	1.220 	&	0.033 	&&	1.303 	&	0.029 	&&	1.345 	&	0.026 	&&	1.345 	&	0.026 	\\
5	&&	1.174 	&	0.024 	&&	1.046 	&	0.030 	&&	1.127 	&	0.027 	&&	1.174 	&	0.024 	&&	1.175 	&	0.024 	\\
4	&&	1.029 	&	0.026 	&&	0.898 	&	0.033 	&&	0.976 	&	0.029 	&&	1.026 	&	0.027 	&&	1.029 	&	0.026 	\\
3	&&	0.888 	&	0.034 	&&	0.761 	&	0.040 	&&	0.831 	&	0.037 	&&	0.883 	&	0.035 	&&	0.888 	&	0.035 	\\
2	&&	0.741 	&	0.055 	&&	0.625 	&	0.062 	&&	0.684 	&	0.058 	&&	0.733 	&	0.056 	&&	0.740 	&	0.055 	\\
1	&&	0.579 	&	0.114 	&&	0.499 	&	0.094 	&&	0.531 	&	0.105 	&&	0.569 	&	0.113 	&&	0.577 	&	0.114 	\\
0	&&	0.396 	&	0.496 	&&	0.352 	&	0.447 	&&	0.371 	&	0.475 	&&	0.392 	&	0.494 	&&	0.396 	&	0.496 	\\
\hline\hline
HMSE	&& \multicolumn{2}{c}{19.120}	&&	\multicolumn{2}{c}{19.345}&& \multicolumn{2}{c}{19.025}&&	\multicolumn{2}{c}{19.049}&&	\multicolumn{2}{c}{19.105}	\\ \hline
\end{tabular}
}

 \bigskip

 \resizebox{\linewidth}{!}{
 \begin{tabular}{ l c c c c c c c c c c c c c c c c c c c }
 \multicolumn{12}{l}{(b) -1/+2 system and -1/+2/+3 system for $\lambda^{[1]}=0.5$}\\

 \hline
&& \multicolumn{2}{l}{-1/+2 system} && \multicolumn{6}{l}{-1/+2/+3 system}\\\cline{3-4} \cline{6-16}
 & $\varphi$& \multicolumn{2}{c}{-} && \multicolumn{2}{c}{8200} && \multicolumn{2}{c}{16800} && \multicolumn{2}{c}{48100}&& \multicolumn{2}{c}{94300} \\\cline{3-4} \cline{6-7}\cline{9-10}\cline{12-13}\cline{15-16}
Level $\ell$ && $r(\ell)$ & $\P{L=\ell}$ && $r(\ell)$ & $\P{L=\ell}$ && $r(\ell)$ & $\P{L=\ell}$ && $r(\ell)$ & $\P{L=\ell}$ && $r(\ell)$ & $\P{L=\ell}$ \\
 \hline
9	&&	2.438 	&	0.202 	&&	2.367 	&	0.213 	&&	2.412 	&	0.206 	&&	2.437 	&	0.202 	&&	2.438 	&	0.202 	\\
8	&&	1.493 	&	0.093 	&&	1.446 	&	0.100 	&&	1.478 	&	0.096 	&&	1.493 	&	0.093 	&&	1.493 	&	0.093 	\\
7	&&	1.161 	&	0.059 	&&	1.115 	&	0.065 	&&	1.145 	&	0.062 	&&	1.160 	&	0.060 	&&	1.161 	&	0.059 	\\
6	&&	0.949 	&	0.047 	&&	0.911 	&	0.050 	&&	0.935 	&	0.048 	&&	0.948 	&	0.047 	&&	0.949 	&	0.047 	\\
5	&&	0.815 	&	0.040 	&&	0.766 	&	0.045 	&&	0.796 	&	0.042 	&&	0.814 	&	0.040 	&&	0.815 	&	0.040 	\\
4	&&	0.676 	&	0.045 	&&	0.643 	&	0.044 	&&	0.659 	&	0.044 	&&	0.673 	&	0.045 	&&	0.675 	&	0.045 	\\
3	&&	0.606 	&	0.042 	&&	0.558 	&	0.048 	&&	0.587 	&	0.045 	&&	0.605 	&	0.043 	&&	0.606 	&	0.042 	\\
2	&&	0.476 	&	0.074 	&&	0.445 	&	0.066 	&&	0.457 	&	0.071 	&&	0.472 	&	0.074 	&&	0.475 	&	0.074 	\\
1	&&	0.446 	&	0.059 	&&	0.417 	&	0.053 	&&	0.429 	&	0.056 	&&	0.443 	&	0.059 	&&	0.446 	&	0.059 	\\
0	&&	0.312 	&	0.338 	&&	0.293 	&	0.316 	&&	0.301 	&	0.329 	&&	0.310 	&	0.337 	&&	0.311 	&	0.338 	\\
\hline\hline
HMSE	&& \multicolumn{2}{c}{21.288}	&& \multicolumn{2}{c}{21.509}	&&	\multicolumn{2}{c}{21.325}&&	\multicolumn{2}{c}{21.275}&&	\multicolumn{2}{c}{21.285}	\\ \hline
\end{tabular}
}

 \bigskip

 \resizebox{\linewidth}{!}{
 \begin{tabular}{ l c c c c c c c c c c c c c c c c c c c }
 \multicolumn{12}{l}{(c) -1/+1 system and -1/+1/+2 system for $\lambda^{[1]}=2$}\\

 \hline
&& \multicolumn{2}{l}{-1/+1 system} && \multicolumn{6}{l}{-1/+1/+2 system}\\\cline{3-4} \cline{6-16}
 & $\varphi$& \multicolumn{2}{c}{-} && \multicolumn{2}{c}{8200} && \multicolumn{2}{c}{16800} && \multicolumn{2}{c}{48100}&& \multicolumn{2}{c}{94300} \\\cline{3-4} \cline{6-7}\cline{9-10}\cline{12-13}\cline{15-16}
Level $\ell$ && $r(\ell)$ & $\P{L=\ell}$ && $r(\ell)$ & $\P{L=\ell}$ && $r(\ell)$ & $\P{L=\ell}$ && $r(\ell)$ & $\P{L=\ell}$ && $r(\ell)$ & $\P{L=\ell}$ \\
 \hline
9	&&	1.515 	&	0.554 	&&	1.479 	&	0.574 	&&	1.501 	&	0.563 	&&	1.514 	&	0.555 	&&	1.515 	&	0.554 	\\
8	&&	0.640 	&	0.111 	&&	0.616 	&	0.119 	&&	0.631 	&	0.115 	&&	0.640 	&	0.112 	&&	0.640 	&	0.111 	\\
7	&&	0.476 	&	0.050 	&&	0.450 	&	0.054 	&&	0.466 	&	0.052 	&&	0.475 	&	0.050 	&&	0.476 	&	0.050 	\\
6	&&	0.393 	&	0.031 	&&	0.364 	&	0.033 	&&	0.380 	&	0.032 	&&	0.392 	&	0.031 	&&	0.393 	&	0.031 	\\
5	&&	0.339 	&	0.024 	&&	0.307 	&	0.025 	&&	0.324 	&	0.024 	&&	0.338 	&	0.024 	&&	0.339 	&	0.024 	\\
4	&&	0.298 	&	0.021 	&&	0.265 	&	0.022 	&&	0.281 	&	0.022 	&&	0.296 	&	0.022 	&&	0.298 	&	0.021 	\\
3	&&	0.263 	&	0.023 	&&	0.228 	&	0.022 	&&	0.245 	&	0.023 	&&	0.260 	&	0.023 	&&	0.262 	&	0.023 	\\
2	&&	0.229 	&	0.029 	&&	0.196 	&	0.026 	&&	0.210 	&	0.028 	&&	0.226 	&	0.029 	&&	0.229 	&	0.029 	\\
1	&&	0.194 	&	0.043 	&&	0.165 	&	0.032 	&&	0.176 	&	0.038 	&&	0.190 	&	0.042 	&&	0.194 	&	0.043 	\\
0	&&	0.154 	&	0.114 	&&	0.133 	&	0.093 	&&	0.141 	&	0.105 	&&	0.152 	&	0.113 	&&	0.154 	&	0.114 	\\
\hline\hline
HMSE	&& \multicolumn{2}{c}{394.566}	&& \multicolumn{2}{c}{398.329}	&&	\multicolumn{2}{c}{394.815}&&	\multicolumn{2}{c}{394.581}&&	\multicolumn{2}{c}{394.553}	\\ \hline
\end{tabular}
}

\end{table}

\begin{table}[p]
 \caption{Optimal BM relativities and stationary distribution of $L$ for various threshold $\varphi$ and $\sigma^{[2]}$ under the two BMSs: the -1/+1 and -1/+1/+2 systems} \label{tab.ex4}
 \centering
 \resizebox{\linewidth}{!}{
 \begin{tabular}{ l c c c c c c c c c c c c c c c c c c c }
 \\
 \multicolumn{12}{l}{(a) -1/+1 system and -1/+1/+2 system for $\left(\sigma^{[2]}\right)^2=0.01$}\\

 \hline
&& \multicolumn{2}{l}{-1/+1 system} && \multicolumn{6}{l}{-1/+1/+2 system}\\\cline{3-4} \cline{6-16}
 & $\varphi$& \multicolumn{2}{c}{-} && \multicolumn{2}{c}{9000} && \multicolumn{2}{c}{16800} && \multicolumn{2}{c}{38000}&& \multicolumn{2}{c}{60400} \\\cline{3-4} \cline{6-7}\cline{9-10}\cline{12-13}\cline{15-16}
Level $\ell$ && $r(\ell)$ & $\P{L=\ell}$ && $r(\ell)$ & $\P{L=\ell}$ && $r(\ell)$ & $\P{L=\ell}$ && $r(\ell)$ & $\P{L=\ell}$ && $r(\ell)$ & $\P{L=\ell}$ \\
 \hline
9	&&	3.070 	&	0.135 	&&	2.852 	&	0.154 	&&	2.976 	&	0.143 	&&	3.060 	&	0.136 	&&	3.069 	&	0.135 	\\
8	&&	1.951 	&	0.055 	&&	1.786 	&	0.066 	&&	1.879 	&	0.059 	&&	1.944 	&	0.055 	&&	1.951 	&	0.055 	\\
7	&&	1.572 	&	0.034 	&&	1.416 	&	0.042 	&&	1.503 	&	0.037 	&&	1.564 	&	0.034 	&&	1.571 	&	0.034 	\\
6	&&	1.344 	&	0.026 	&&	1.193 	&	0.033 	&&	1.276 	&	0.029 	&&	1.337 	&	0.026 	&&	1.343 	&	0.026 	\\
5	&&	1.174 	&	0.024 	&&	1.027 	&	0.030 	&&	1.107 	&	0.027 	&&	1.167 	&	0.024 	&&	1.174 	&	0.024 	\\
4	&&	1.029 	&	0.026 	&&	0.886 	&	0.033 	&&	0.963 	&	0.029 	&&	1.021 	&	0.027 	&&	1.028 	&	0.026 	\\
3	&&	0.888 	&	0.034 	&&	0.761 	&	0.041 	&&	0.827 	&	0.038 	&&	0.881 	&	0.035 	&&	0.888 	&	0.035 	\\
2	&&	0.741 	&	0.055 	&&	0.628 	&	0.062 	&&	0.688 	&	0.058 	&&	0.735 	&	0.056 	&&	0.740 	&	0.055 	\\
1	&&	0.579 	&	0.114 	&&	0.531 	&	0.094 	&&	0.558 	&	0.105 	&&	0.576 	&	0.113 	&&	0.578 	&	0.114 	\\
0	&&	0.396 	&	0.496 	&&	0.367 	&	0.445 	&&	0.384 	&	0.474 	&&	0.395 	&	0.494 	&&	0.396 	&	0.496 	\\
\hline\hline
HMSE	&& \multicolumn{2}{c}{9.480}	&& \multicolumn{2}{c}{10.158}	&&	\multicolumn{2}{c}{9.760}&&	\multicolumn{2}{c}{9.507}&&	\multicolumn{2}{c}{9.482}	\\ \hline
\end{tabular}
}
 \bigskip

 \resizebox{\linewidth}{!}{
 \begin{tabular}{ l c c c c c c c c c c c c c c c c c c c }
 \multicolumn{12}{l}{(b) -1/+1 system and -1/+1/+2 system for $\left(\sigma^{[2]}\right)^2=0.29$}\\

 \hline
&& \multicolumn{2}{l}{-1/+1 system} && \multicolumn{6}{l}{-1/+1/+2 system}\\\cline{3-4} \cline{6-16}
 & $\varphi$& \multicolumn{2}{c}{-} && \multicolumn{2}{c}{8200} && \multicolumn{2}{c}{16800} && \multicolumn{2}{c}{48100}&& \multicolumn{2}{c}{94300} \\\cline{3-4} \cline{6-7}\cline{9-10}\cline{12-13}\cline{15-16}
Level $\ell$ && $r(\ell)$ & $\P{L=\ell}$ && $r(\ell)$ & $\P{L=\ell}$ && $r(\ell)$ & $\P{L=\ell}$ && $r(\ell)$ & $\P{L=\ell}$ && $r(\ell)$ & $\P{L=\ell}$ \\
 \hline
9	&&	3.070 	&	0.135 	&&	2.893 	&	0.154 	&&	3.009 	&	0.143 	&&	3.069 	&	0.136 	&&	3.071 	&	0.135 	\\
8	&&	1.951 	&	0.055 	&&	1.827 	&	0.066 	&&	1.915 	&	0.059 	&&	1.954 	&	0.055 	&&	1.952 	&	0.055 	\\
7	&&	1.572 	&	0.034 	&&	1.451 	&	0.042 	&&	1.535 	&	0.037 	&&	1.574 	&	0.034 	&&	1.573 	&	0.034 	\\
6	&&	1.344 	&	0.026 	&&	1.220 	&	0.033 	&&	1.303 	&	0.029 	&&	1.345 	&	0.026 	&&	1.345 	&	0.026 	\\
5	&&	1.174 	&	0.024 	&&	1.046 	&	0.030 	&&	1.127 	&	0.027 	&&	1.174 	&	0.024 	&&	1.175 	&	0.024 	\\
4	&&	1.029 	&	0.026 	&&	0.898 	&	0.033 	&&	0.976 	&	0.029 	&&	1.026 	&	0.027 	&&	1.029 	&	0.026 	\\
3	&&	0.888 	&	0.034 	&&	0.761 	&	0.040 	&&	0.831 	&	0.037 	&&	0.883 	&	0.035 	&&	0.888 	&	0.035 	\\
2	&&	0.741 	&	0.055 	&&	0.625 	&	0.062 	&&	0.684 	&	0.058 	&&	0.733 	&	0.056 	&&	0.740 	&	0.055 	\\
1	&&	0.579 	&	0.114 	&&	0.499 	&	0.094 	&&	0.531 	&	0.105 	&&	0.569 	&	0.113 	&&	0.577 	&	0.114 	\\
0	&&	0.396 	&	0.496 	&&	0.352 	&	0.447 	&&	0.371 	&	0.475 	&&	0.392 	&	0.494 	&&	0.396 	&	0.496 	\\
\hline\hline
HMSE	&& \multicolumn{2}{c}{19.120}	&& \multicolumn{2}{c}{19.345}	&&	\multicolumn{2}{c}{19.025}&&	\multicolumn{2}{c}{19.049}&&	\multicolumn{2}{c}{19.105}	\\ \hline
\end{tabular}
}

 \bigskip

 \resizebox{\linewidth}{!}{
 \begin{tabular}{ l c c c c c c c c c c c c c c c c c c c }
 \multicolumn{12}{l}{(b) -1/+1 system and -1/+1/+2 system for $\left(\sigma^{[2]}\right)^2=1$}\\

 \hline
&& \multicolumn{2}{l}{-1/+1 system} && \multicolumn{6}{l}{-1/+1/+2 system}\\\cline{3-4} \cline{6-16}
 & $\varphi$& \multicolumn{2}{c}{-} && \multicolumn{2}{c}{6400} && \multicolumn{2}{c}{16100} && \multicolumn{2}{c}{68100}&& \multicolumn{2}{c}{182300} \\\cline{3-4} \cline{6-7}\cline{9-10}\cline{12-13}\cline{15-16}
Level $\ell$ && $r(\ell)$ & $\P{L=\ell}$ && $r(\ell)$ & $\P{L=\ell}$ && $r(\ell)$ & $\P{L=\ell}$ && $r(\ell)$ & $\P{L=\ell}$ && $r(\ell)$ & $\P{L=\ell}$ \\
 \hline
9	&&	3.070 	&	0.135 	&&	2.968 	&	0.154 	&&	3.078 	&	0.143 	&&	3.092 	&	0.136 	&&	3.076 	&	0.135 	\\
8	&&	1.951 	&	0.055 	&&	1.901 	&	0.066 	&&	1.987 	&	0.059 	&&	1.980 	&	0.055 	&&	1.958 	&	0.055 	\\
7	&&	1.572 	&	0.034 	&&	1.512 	&	0.041 	&&	1.598 	&	0.037 	&&	1.598 	&	0.034 	&&	1.578 	&	0.034 	\\
6	&&	1.344 	&	0.026 	&&	1.266 	&	0.032 	&&	1.352 	&	0.029 	&&	1.365 	&	0.026 	&&	1.350 	&	0.026 	\\
5	&&	1.174 	&	0.024 	&&	1.073 	&	0.030 	&&	1.161 	&	0.027 	&&	1.189 	&	0.024 	&&	1.179 	&	0.024 	\\
4	&&	1.029 	&	0.026 	&&	0.914 	&	0.032 	&&	0.995 	&	0.029 	&&	1.035 	&	0.027 	&&	1.031 	&	0.026 	\\
3	&&	0.888 	&	0.034 	&&	0.750 	&	0.040 	&&	0.827 	&	0.037 	&&	0.885 	&	0.035 	&&	0.889 	&	0.034 	\\
2	&&	0.741 	&	0.055 	&&	0.616 	&	0.062 	&&	0.675 	&	0.058 	&&	0.729 	&	0.056 	&&	0.739 	&	0.055 	\\
1	&&	0.579 	&	0.114 	&&	0.439 	&	0.095 	&&	0.480 	&	0.106 	&&	0.551 	&	0.113 	&&	0.573 	&	0.114 	\\
0	&&	0.396 	&	0.496 	&&	0.322 	&	0.448 	&&	0.346 	&	0.476 	&&	0.383 	&	0.494 	&&	0.394 	&	0.496 	\\
\hline\hline
HMSE	&& \multicolumn{2}{c}{59.706}	&& \multicolumn{2}{c}{59.053}	&&	\multicolumn{2}{c}{58.814}&&	\multicolumn{2}{c}{59.365}&&	\multicolumn{2}{c}{59.631}	\\ \hline
\end{tabular}
}

\end{table}


\begin{table}[p]
\centering
\caption{(Data analysis) Observable policy characteristics used as covariates} \label{tab.x}\vspace{.05in}
\begin{tabular}{l|l r r r r r r r }
\hline
Categorical & \multirow{2}{*}{Description} && \multicolumn{3}{c}{\multirow{2}{*}{Proportions}} \\
variables & & & & \\
\hline
Entity type & Type of local government entity \\
		& \quad\quad\quad\quad\quad\quad Miscellaneous 	&& \multicolumn{3}{c}{5.03$\%$} \\
		& \quad\quad\quad\quad\quad\quad City			&& \multicolumn{3}{c}{9.66$\%$} \\
		& \quad\quad\quad\quad\quad\quad County			&& \multicolumn{3}{c}{11.47$\%$} \\
		& \quad\quad\quad\quad\quad\quad School			&& \multicolumn{3}{c}{36.42$\%$} \\
		& \quad\quad\quad\quad\quad\quad Town			&& \multicolumn{3}{c}{16.90$\%$} \\
		& \quad\quad\quad\quad\quad\quad Village 			&& \multicolumn{3}{c}{20.52$\%$} \\
\hline
Coverage & Collision coverage amount for old and new vehicles\\
		& \quad\quad\quad\quad\quad\quad Coverage $\in (0,0.14] = 1 $ && \multicolumn{3}{c}{33.40$\%$} \\
		& \quad\quad\quad\quad\quad\quad Coverage $\in (0.14,0.74] = 2 $	&& \multicolumn{3}{c}{33.20$\%$} \\
		& \quad\quad\quad\quad\quad\quad Coverage $\in (0.74,\infty) = 3$	&& \multicolumn{3}{c}{33.40$\%$} \\
\hline
\end{tabular}
\end{table}


\bigskip

\begin{table}[p]
\caption{(Data analysis) Estimation results under the frequency--severity model }\vspace{.05in}\label{est.model2}
\centering
\begin{tabular}{ l r r r r l r r r r l c c }
 \hline
&& &\multicolumn{2}{c}{95$\%$ CI}& \\
parameter& Est & Std.dev & lower & upper& \\
 \hline
 \multicolumn{3}{l}{ \textbf{Frequency part}} \\
\quad Intercept &	-2.767& 	0.318& -3.417& -2.153&*&\\
\quad City & 	 0.597& 	0.337& -0.051& 	1.272& &\\
\quad County & 1.907& 	0.335& 	1.271& 	2.587&*&	\\
\quad School &	 0.411& 	0.304& -0.181& 	1.014& &\\
\quad Town &	-1.351& 	0.384& -2.103& -0.584&*&\\
\quad Village &	-0.012& 	0.323& -0.626& 	0.654& &\\
\quad Coverage2 &	 1.247& 	0.212& 	0.829& 	1.667&*& \\
\quad Coverage3 & 2.139& 	0.230& 	1.713& 	2.615&*& 	\\
\hline
 \multicolumn{3}{l}{ \textbf{Severity part}} \\

\quad Intercept & 	 8.829& 	0.375& 	8.103& 	9.588&*& 	 \\
\quad City 	 & 		-0.036& 	0.353& -0.737& 	0.637& & \\
\quad County 	 & 		 0.341& 	0.338& -0.336& 	0.980& & \\
\quad School 	 & 		-0.173& 	0.328& -0.805& 	0.484& & \\
\quad Town 	 &		 0.497& 	0.440& -0.356& 	1.349& & \\
\quad Village &		 0.316& 	0.346& -0.357& 	0.994& & \\
\quad Coverage2 &		 0.180& 	0.244& -0.308& 	0.646& & 	\\
\quad Coverage3 & 		-0.027& 	0.261& -0.533& 	0.493& & \\
\quad $1/\psi^{[2]}$& 0.670& 	0.041& 	0.592& 	0.752&*& 	\\
\hline
\multicolumn{3}{l}{ \textbf{Copula part}} \\

\quad $\left(\sigma^{[1]}\right)^2$& 	 0.992& 	0.142& 	0.746& 	1.292&*& \\
\quad $\left(\sigma^{[2]}\right)^2$& 	 0.293& 	0.067& 	0.176& 	0.433&*&\\
\quad $\rho$& 	-0.447& 	0.130& -0.690& -0.190&*\\
 \hline
\end{tabular}
\end{table}


\begin{table}[p]
 \caption{ (Data analysis) Optimal BM relativities and stationary distribution of $L$ for various threshold $\varphi$ and under the different BMSs} \label{tab.dat}
 \centering
 \resizebox{\linewidth}{!}{

 \begin{tabular}{ l c c c c c c c c c c c c c c c c c c c }
 \\
 \multicolumn{12}{l}{(a) -1/+1/+1 system and -1/+1/+2 system }\\
 \hline
&& \multicolumn{2}{l}{-1/+1/+1 system} && \multicolumn{6}{l}{-1/+1/+2 system}\\\cline{3-4} \cline{6-16}
 & $\varphi$& \multicolumn{2}{c}{-} && \multicolumn{2}{c}{10100} && \multicolumn{2}{c}{21400} && \multicolumn{2}{c}{64500}&& \multicolumn{2}{c}{132100} \\\cline{3-4} \cline{6-7}\cline{9-10}\cline{12-13}\cline{15-16}
Level $\ell$ && $r(\ell)$ & $\P{L=\ell}$ && $r(\ell)$ & $\P{L=\ell}$ && $r(\ell)$ & $\P{L=\ell}$ && $r(\ell)$ & $\P{L=\ell}$ && $r(\ell)$ & $\P{L=\ell}$ \\
 \hline
9	&&	0.966 	&	0.146 	&&	0.952 	&	0.153 	&&	0.959 	&	0.148 	&&	0.965 	&	0.146 	&&	0.966 	&	0.146 	\\
8	&&	0.454 	&	0.034 	&&	0.441 	&	0.038 	&&	0.448 	&	0.035 	&&	0.453 	&	0.034 	&&	0.454 	&	0.034 	\\
7	&&	0.369 	&	0.018 	&&	0.354 	&	0.021 	&&	0.361 	&	0.019 	&&	0.368 	&	0.018 	&&	0.369 	&	0.018 	\\
6	&&	0.327 	&	0.013 	&&	0.312 	&	0.016 	&&	0.318 	&	0.014 	&&	0.326 	&	0.013 	&&	0.327 	&	0.013 	\\
5	&&	0.302 	&	0.012 	&&	0.286 	&	0.014 	&&	0.292 	&	0.013 	&&	0.300 	&	0.012 	&&	0.302 	&	0.012 	\\
4	&&	0.284 	&	0.013 	&&	0.268 	&	0.016 	&&	0.273 	&	0.014 	&&	0.282 	&	0.013 	&&	0.284 	&	0.013 	\\
3	&&	0.271 	&	0.017 	&&	0.256 	&	0.021 	&&	0.260 	&	0.018 	&&	0.268 	&	0.017 	&&	0.270 	&	0.017 	\\
2	&&	0.259 	&	0.029 	&&	0.248 	&	0.037 	&&	0.250 	&	0.032 	&&	0.256 	&	0.029 	&&	0.259 	&	0.029 	\\
1	&&	0.250 	&	0.075 	&&	0.243 	&	0.066 	&&	0.243 	&	0.071 	&&	0.247 	&	0.074 	&&	0.250 	&	0.075 	\\
0	&&	0.243 	&	0.645 	&&	0.239 	&	0.619 	&&	0.239 	&	0.635 	&&	0.241 	&	0.645 	&&	0.242 	&	0.645 	\\
\hline\hline
HMSE	&& \multicolumn{2}{c}{91.242}	&& \multicolumn{2}{c}{92.128}	&&	\multicolumn{2}{c}{91.634}&&	\multicolumn{2}{c}{91.269}&&	\multicolumn{2}{c}{91.241}	\\ \hline
\end{tabular}
}
 \bigskip

 \resizebox{\linewidth}{!}{
 \begin{tabular}{ l c c c c c c c c c c c c c c c c c c c }
 \multicolumn{12}{l}{(b) -1/+2/+2 system and -1/+2/+3 system }\\
 \hline
&& \multicolumn{2}{l}{-1/+2/+2 system} && \multicolumn{6}{l}{-1/+2/+3 system}\\\cline{3-4} \cline{6-16}
 & $\varphi$& \multicolumn{2}{c}{-} && \multicolumn{2}{c}{10100} && \multicolumn{2}{c}{21400} && \multicolumn{2}{c}{64500}&& \multicolumn{2}{c}{132100} \\\cline{3-4} \cline{6-7}\cline{9-10}\cline{12-13}\cline{15-16}
Level $\ell$ && $r(\ell)$ & $\P{L=\ell}$ && $r(\ell)$ & $\P{L=\ell}$ && $r(\ell)$ & $\P{L=\ell}$ && $r(\ell)$ & $\P{L=\ell}$ && $r(\ell)$ & $\P{L=\ell}$ \\
 \hline
9	&&	0.932 	&	0.178 	&&	0.927 	&	0.183 	&&	0.929 	&	0.180 	&&	0.931 	&	0.178 	&&	0.932 	&	0.178 	\\
8	&&	0.428 	&	0.051 	&&	0.424 	&	0.054 	&&	0.426 	&	0.053 	&&	0.428 	&	0.052 	&&	0.428 	&	0.051 	\\
7	&&	0.343 	&	0.030 	&&	0.339 	&	0.033 	&&	0.341 	&	0.031 	&&	0.343 	&	0.030 	&&	0.343 	&	0.030 	\\
6	&&	0.302 	&	0.023 	&&	0.298 	&	0.025 	&&	0.300 	&	0.024 	&&	0.301 	&	0.023 	&&	0.302 	&	0.023 	\\
5	&&	0.278 	&	0.020 	&&	0.274 	&	0.023 	&&	0.276 	&	0.021 	&&	0.278 	&	0.020 	&&	0.278 	&	0.020 	\\
4	&&	0.262 	&	0.024 	&&	0.260 	&	0.025 	&&	0.260 	&	0.025 	&&	0.261 	&	0.024 	&&	0.262 	&	0.024 	\\
3	&&	0.253 	&	0.024 	&&	0.251 	&	0.031 	&&	0.251 	&	0.027 	&&	0.253 	&	0.025 	&&	0.253 	&	0.024 	\\
2	&&	0.245 	&	0.055 	&&	0.244 	&	0.051 	&&	0.244 	&	0.054 	&&	0.245 	&	0.055 	&&	0.245 	&	0.055 	\\
1	&&	0.243 	&	0.046 	&&	0.242 	&	0.043 	&&	0.242 	&	0.045 	&&	0.242 	&	0.046 	&&	0.243 	&	0.046 	\\
0	&&	0.239 	&	0.547 	&&	0.238 	&	0.531 	&&	0.238 	&	0.541 	&&	0.238 	&	0.546 	&&	0.239 	&	0.547 	\\
\hline\hline
HMSE	&& \multicolumn{2}{c}{93.483}	&&\multicolumn{2}{c}{93.786}&& \multicolumn{2}{c}{93.625}	&&	\multicolumn{2}{c}{93.498}&&	\multicolumn{2}{c}{93.484}	\\ \hline
\end{tabular}
}

\end{table}


\end{document}